%% file: ijcai26.tex
\documentclass{article}
\usepackage{ijcai26}

\usepackage{times}
\usepackage{soul}
\usepackage{url}
\usepackage[hidelinks]{hyperref}
\usepackage[utf8]{inputenc}
\usepackage[small]{caption}
\usepackage{graphicx}
\usepackage{amsmath}
\usepackage{amsthm}
\usepackage{booktabs}
\usepackage{algorithm}
\usepackage[noend]{algpseudocode} 
\usepackage[switch]{lineno}

\input{macros.tex}

\title{Ensuring Logic in the Fog: Sound POMDP Synthesis with LTL Objectives}

\author{
Can Zhou$^1$
\and
Yulong Gao$^1$\and
Pian Yu$^{2}$\\
\affiliations
$^1$Imperial College London, United Kingdom\\
$^2$University College London, United Kingdom\\
\emails
\{c.zhou24, yulong.gao\}@imperial.ac.uk,
pian.yu@ucl.ac.uk
}

\begin{document}

\maketitle

\begin{abstract}
Synthesising autonomous agents that can navigate uncertain environments while adhering to complex temporal constraints remains a fundamental challenge. 
While Linear Temporal Logic (LTL) provides a rigorous language for specifying such tasks, the inherent undecidability of qualitatively verifying LTL satisfaction in partially observable Markov decision processes renders quantitative synthesis difficult, especially when designing reliable reward signals for approximate solvers.
In this paper, we bridge this gap with a novel, sound reward-shaping mechanism that dynamically generates belief-dependent rewards grounded in certified LTL satisfaction. By integrating this mechanism into an enhanced Monte Carlo Planning framework, we empower agents to navigate the `fog' of partial observability with a search process focused on maximising verifiable success. Our experiments demonstrate that this approach not only thrives in scenarios where existing solvers fail but also maintains effectiveness and scalability across diverse benchmark domains.

\end{abstract}

\section{Introduction}
\label{sec:intro}
Partially Observable Markov Decision Processes (POMDPs) provide a principled framework for sequential decision-making amidst the fog of uncertainty, accounting for both stochastic dynamics and incomplete state information \cite{kaelbling1995partially,bongard2008probabilistic}. When coupled with $\omega$-regular properties like Linear Temporal Logic (LTL)~\cite{pnueli1977temporal}, they allow for the synthesis of policies that satisfy complex, temporally extended goals over infinite traces in uncertain, partially observable environments. Therefore, POMDP-LTL synthesis has attracted considerable interest across AI, robotics, and control~\cite{sharan2014finite,bouton2020point,kalagarla2024optimal}.

Despite its potential, quantitative POMDP-LTL synthesis, aiming to maximise satisfaction probability, remains notably challenging. Optimal policy synthesis for general planning tasks in infinite-horizon POMDPs is undecidable, even for simple reachability objectives, as it necessitates reasoning over an uncountable belief space and may require unbounded memory~\cite{chatterjee2016decidable}. Practical approaches therefore rely on tractable approximations that reformulate the problem as reward maximisation, with reward functions encoding task satisfaction, and apply approximate solvers such as Point-Based Value Iteration (PBVI)~\cite{pineau2003point} and Partially Observable Monte Carlo Planning (POMCP)~\cite{silver2010monte}.

However, constructing appropriate reward signals for full LTL is non-trivial, since its semantics concern non-terminating behaviour over infinite traces. Existing approaches either restrict attention to finite-trace LTL (LTL$_f$), which reduces satisfaction to a one-shot reachability problem~\cite{kalagarla2022optimal,liu2021leveraging}, or apply optimistic reward shaping based on state-space accepting maximal end components (AMECs)~\cite{bouton2020point}. Nevertheless, due to partial observability, POMDP policy synthesis must operate over the belief space, and many objectives such as recurrence and persistence require full LTL. Furthermore, an optimistic reward for POMDP-LTL may assign a positive satisfaction signal to a belief state even when no single belief-based policy (defined as a policy shared by all states within the belief’s support) can realise the implied satisfaction probability. This challenge, which we term the \emph{common policy issue}, is illustrated in the following example.


\begin{figure}
    \centering
    \includegraphics[width=0.5\linewidth]{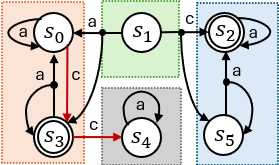}
    \caption{A POMDP where states sharing the same observation are indicated by a common background colour. This colour-coding convention is maintained throughout all subsequent figures. }
    \label{fig:motivation}
\end{figure}

\begin{example}
Consider the POMDP in Figure~\ref{fig:motivation} with action set $\{a, c\}$ and initial belief $b_0=\{s_1:1.0\}$. The LTL objective $\varphi=\ltlG\ltlF(s_2 \lor s_3)$, requiring the system to visit states $s_2$ or $s_3$ infinitely often. Under full observability, two AMECs exist: 1. $\{s_0, s_3\}$, which achieves almost-sure satisfaction (i.e. probability one) by alternating actions $a$ and $c$, and 2. $\{s_2, s_5\}$, which succeeds by repeatedly playing $a$. This 
suggests that all beliefs supported on AMECs yield satisfaction with probability one. However, with partial observability, no belief-based policy supported on $\{s_0, s_3\}$ can satisfy $\varphi$ almost surely: state $s_0$ requires action $c$, but taking $c$ in $s_3$ leads to the sink $s_4$. Conversely, beliefs supported on $\{s_2,s_5\}$ can still satisfy $\varphi$ almost surely by playing $a$.

Furthermore, assigning positive reward one to all states within the state-space AMECs (i.e., $\{s_0, s_2, s_3, s_5\}$) incorrectly suggests that actions $a$ and $c$ are equally optimal at $b_0$. In reality, $c$ is the uniquely optimal action.
\label{ex-motivation}
\end{example}

More fundamentally, encoding exact LTL satisfaction as a reward signal is  undecidable for POMDPs because of the probability leakage inherent in belief transitions, as shown in Example~\ref{ex-motivation} for beliefs supported on $\{s_0,s_3\}$. For such beliefs, deciding whether a policy with positive satisfaction probability exists is undecidable~\cite{baier2008decision}. These theoretical barriers make it difficult to construct sound reward signals. Without these, approximate solvers struggle to converge on reliable, belief-based policies for LTL objectives.

\textbf{Contributions.}
We bridge the theoretical and practical gaps in quantitative POMDP-LTL synthesis by introducing a novel mechanism for sound reward shaping. In contrast to fixed scalar reward structures, our approach defines reward assignment rules that integrate with an enhanced online POMCP planner. This allows the system to dynamically evaluate rewards and guide the search toward maximising certified satisfaction, thereby enabling reliable belief-based policy synthesis. To the best of our knowledge, this is the first sound and practical anytime synthesis algorithm for full LTL in POMDPs.
Our main contributions are summarised as follows: 1) \textbf{Sound Reward Shaping:} We introduce a sound reward shaping mechanism over the belief space by leveraging belief support structures, grounded in the insight that almost-sure satisfaction is the only certifiable property in POMDP-LTL. A pruning strategy streamlines reward construction, and the resulting rewards provide a certified lower bound on the true LTL satisfaction probability. 2) \textbf{Sound Anytime Synthesis:} We integrate our reward assignment rules into an enhanced POMCP algorithm specialised for anytime LTL synthesis. Once the search reaches a certified belief region, the agent may continue exploring for higher satisfaction or switch to a verified policy that guarantees satisfaction. This offers the first sound and practical method for real-time POMDP-LTL synthesis. 3) \textbf{Empirical Validation:} We show that our method synthesises reliable belief-based policies in scenarios where existing approaches fail. We further show its effectiveness and scalability on three standard POMDP benchmarks with diverse LTL tasks.

    
    

\textbf{Related Work.}
Owing to the undecidability of quantitative synthesis for POMDPs with LTL, prior work has turned to tractable approximations. One possible approach is to restrict policies to randomised finite-state controllers (FSCs)~\cite{chatterjee2015qualitative,sharan2014finite,ahmadi2020stochastic,carr2020verifiable,carr2021task}. However, such finite-memory restrictions impose key limitations: optimisation remains only locally optimal for any fixed memory size, and while performance is highly sensitive to that size, the minimal required memory is undecidable~\cite{madani1999undecidability}.  A complementary line of work~\cite{liu2021leveraging,kalagarla2022optimal,kalagarla2024optimal} considers belief-based policies for finite-trace LTL (LTL$_f$). However, many complex planning tasks require infinite-horizon recurrence and persistence, necessitating full LTL. \cite{bouton2020point} addresses this by applying PBVI with reward signals derived from state-based AMECs. As illustrated in Example~\ref{ex-motivation}, such state-based reward shaping can be overly optimistic under partial observability. A detailed discussion of related work is provided in Appendix~A.

\section{Preliminaries and Problem Statement}
This section first outlines the preliminaries of LTL and its equivalent Limit-Deterministic B\"{u}chi Automaton (LDBA) representation. We then provide the foundations of POMDPs and the POMCP algorithm for online planning. Finally, the problem under consideration is formulated. 

\subsection{Linear Temporal Logic}
LTL~\cite{pnueli1977temporal} extends propositional logic by introducing temporal operators that enable the specification of task requirements over time. The syntax of an LTL path formula over a finite set of atomic propositions ($\ap$) is defined inductively as follows:
\begin{equation*}\label{LTL}  
\varphi::= \ltltrue \mid  a \in \ap \mid \neg\varphi \mid \varphi \wedge\varphi \mid  \ltlX \varphi \mid \varphi \ltlU\varphi,
\end{equation*}
where $\ltlX$ (Next) and $\ltlU$ (Until) are temporal operators.
Additional derived temporal operators include: $\ltlF  \varphi \equiv \ltltrue \ltlU\varphi$ (Eventually) and $\ltlG \varphi \equiv \neg (\ltlF \neg\varphi)$ (Always). Standard Boolean operators, such as $\vee$ (or) and $\rightarrow$ (implies), are used conventionally. The detailed semantics with satisfaction relation of LTL can be found in~\cite{baier2008principles}.
For a formula $\varphi$ and an infinite word $w = w_0 w_1 w_2 \dots \in (2^{\ap})^\omega$, we write $\mathcal{L}(\varphi)$ to denote the language of $\varphi$, which is the set of infinite words over $(2^{\ap})^\omega$ that satisfy $\varphi$.

Every LTL formula admits an equivalent representation as a Nondeterministic B\"{u}chi Automaton (NBA)~\cite{vardi1994reasoning} that accepts the language~$\mathcal{L}(\varphi)$. More recently, a subclass of NBAs, known as limit-deterministic Büchi automata, has shown particularly suitable for the quantitative analysis of probabilistic systems~\cite{sickert2016limit}. In this paper, we also make use of LDBAs to support our approach.

\begin{definition}[LDBA~\cite{sickert2016limit}]
An LDBA $\mathcal{A}$ is defined as a tuple 
$\mathcal{A} = (\Sigma, Q, q_0, \Delta, \acc)$ where $\Sigma$ is an alphabet, $Q = Q_i \uplus Q_{acc}$ is the set of states partitioned into two disjoint sets $Q_i$ and $Q_{acc}$. $q_0 \in Q_i$ is the initial state. $\Delta = \Delta_i \uplus \Delta_{acc}$ where: $\Delta_i : Q_i \times (\Sigma \cup \{\epsilon\})  \to 2^{Q}$, $\Delta_{acc} : Q_{acc} \times \Sigma \to 2^{Q_{acc}}$ is singleton-valued, and $\acc \subseteq Q_{acc}$ is the accepting set.
\end{definition}

An infinite word $w \in \Sigma^\omega$ is recognised by an LDBA $\mathcal{A}$ if there exists an infinite run $r = r_0 r_1 r_2 \dots \in Q^\omega$, beginning from $q_0$ and satisfying $r_{i+1} \in \Delta(r_i, \omega_i)$ for all $i \geq 0$ as well as $\infset(r) \cap \acc \neq \emptyset$, where $\infset(r)$ denotes transitions in the accepting set appearing infinitely often in $r$.
The set of infinite words accepted by the LDBA $\mathcal{A}$ is denoted by $\mathcal{L}(\mathcal{A})$. An LTL formula $\varphi$ over $\ap$ can be efficiently translated into an equivalent LDBA $\mathcal{A}$ with $\Sigma = 2^{\ap}$, such that $\mathcal{L}(\mathcal{A}) = \mathcal{L}(\varphi)$, using state-of-the-art tools such as Owl~\cite{sickert2016limit} and Rabinizer4~\cite{kvretinsky2018rabinizer}.

\subsection{Partially Observable MDPs}
In this paper, we consider the following mathematical model:
\begin{definition}[Labelled POMDPs]
A \emph{labelled POMDP} is defined as a tuple $\M = (S, S_0, A, O, T, Z, L)$, where $S$, $S_0$, $A$, and $O$ are finite sets of states, initial states, actions, and observations, respectively. The state transition function $T : S \times A \to \dist(S)$, where $T(s, a)(s')$ denotes the probability of transitioning to state $s'$ from state $s$ by taking action $a$. The observation function $Z : S \times A \to \dist(O)$, where $Z(s', a)(o)$ denotes the probability of observing $o$ after taking action $a$ and reaching state $s'$. The labelling function $L : S \to 2^{\ap}$ maps each state $s \in S$ to the set of atomic propositions in $\ap$ that hold true at $s$.
\end{definition}


A belief state $b$ of $\M$ is a probability distribution over its state space $S$. 
Let $b_0$ denote the
initial belief state and let~$B$ be the set of all possible belief states for a given POMDP~$\M$. The \emph{belief support} of a belief state~$b \in B$ is the set of states with positive belief, defined as $\supp(b) := \{\, s \in S \mid b(s) > 0 \}.$  
The set of all possible belief supports in~$\mathcal{M}$ is then given by $B_{\supp} := \big\{\, \Theta \subseteq S \;\big|\; \exists\, o \in O,\; \forall\, s \in \Theta,\; o \in \obs(s) \big\},$ where $\obs : S \to 2^O$. 
We write $b(\Theta) := \sum_{s \in \Theta} b(s)$ for the probability mass assigned by $b \in B$ to $\Theta \subseteq S$, called its \emph{belief mass}.
The enabled action set at state $s$ is defined as $A_{\mathrm{E}}(s) := \{ a \in A \mid \post_s(a) \not = \emptyset \}$ where $\post_s(a) = \supp(T(s,a))$. We then define $A_{\mathrm{E}}(\Theta) := \bigcap_{s \in \Theta} A_{\mathrm{E}}(s)$.
Without loss of generality, we assume that states with the same observation admit the same set of enabled actions; that is, for any $s, s'\in \Theta$, $A_{\mathrm{E}}(s) = A_{\mathrm{E}}(s') = A_{\mathrm{E}}(\Theta)$.

\textbf{Paths.} A \emph{history} at time $t$ is a finite sequence $h_t = (a_0, o_1, \dots, a_{t-1}, o_t) \in (A \times O)^t$ capturing the agent's observable interaction with the environment. A \emph{belief path} is an infinite sequence $\rho = (b_0, a_0, o_1, b_1, a_1, o_2, \dots)$ in which each belief $b_{t+1} = \tau(b_t, a_t, o_{t+1})$ is obtained via Bayesian update. The set of such paths starting from $b_0$ is denoted by $\bpath(b_0)$. For any $\rho$, the set of \emph{compatible state paths} $\spath(\rho) \subseteq \mathcal{S}^\omega$ contains all sequences $(s_0, s_1, \dots)$ such that $s_0 \sim b_0$, $s_{t+1} \sim T(s_t, a_t)$, and $o_{t+1} \sim Z(s_{t+1}, a_t)$, consistent with the induced belief evolution.

\textbf{Policy and LTL Satisfaction.} We consider a deterministic belief-based policy $\pi \in \Pi: B \to A$ that maps a belief state to an action. Denote by $\Pi$ the set of these policies. Under a given policy $\pi$, the probability that an LTL specification $\varphi$ holds at belief state $b$ is denoted by 
$$\prob^\pi_b(\varphi) = \prob\{\rho_b \in \bpath(b), \rho_s \in \spath(\rho_b) \mid \rho_s \models \varphi\},$$
which measures all belief executions induced by $\pi$ whose compatible underlying state paths satisfy $\varphi$. Accordingly, the satisfaction probability for a POMDP $\M$ (under the policy $\pi$) with initial belief $b_0$ is defined as $\prob^\pi_\M(\varphi) := \prob^\pi_{b_0}(\varphi)$.

\textbf{Undecidability.} Optimal policy synthesis for POMDPs is generally undecidable under B\"uchi objectives induced by LTL. \cite{baier2008decision} show that, for B\"uchi objectives, both qualitative verification of positive (non-zero) satisfaction probability and quantitative verification are undecidable. The only property that can be certified is almost-sure satisfaction.

\textbf{Product POMDP.} Analogous to the product construction for MDPs~\cite{baier2008principles}, the product POMDP $\Mt$ synchronises the evolution of the POMDP $\M$ with the LDBA $\A_\varphi$ generated from the LTL $\varphi$. This allows satisfaction of $\varphi$ to be checked by tracking accepting runs of the LDBA along paths of $\M$.
\begin{definition}[Product POMDP]\label{productautomaton2}
Given a labelled POMDP $\M= (S, S_0, A, O, T, Z, L)$ and an LDBA $\A_\varphi= (\Sigma, Q, q_{0}, \trans, \acc)$ from the given LTL formula $\varphi$,
the product POMDP is a tuple $\Mt=(S^\times, S_0^\times, A^{\times}, O, T^\times, Z^\times, L^{\times}, \acc^\times)$ with state-space $S^\times=S \times Q$. The set of accepting states is given by $\acc^\times=\{(s, q)\in S^\times  \mid q\in \acc\}$. 
\end{definition}

Given the initial belief state $b_0$ of $\M$, we have $b^\times_0((s, \Delta(q_0, \mathcal{L}(s)))) = b_0(s)$, $\forall s \in S_0$. The set of all possible belief states of ~$\mathcal{M}^\times$ is denoted by $B^\times$. The complete definition of the product POMDP is provided in Appendix~B.

\subsection{Partially Observable Monte-Carlo Planning}
POMCP~\cite{silver2010monte} is a widely adopted online planning algorithm for POMDPs, as it effectively mitigates the \emph{curse of dimensionality} through state sampling and the \emph{curse of history} via history sampling with a black-box simulator. At each step $t$, it uses Monte Carlo Tree Search~\cite{coulom2006efficient} to build and explore a search tree rooted at $\T(h_t) = \langle \N(h_t), \V(h_t), \P(h_t) \rangle$, where $\N(h_t)$ tracks the number of visits to the history $h_t$, $\V(h_t)$ estimates the expected return, and $\P(h_t)$ holds particles that approximate the belief state $b_t$. 
Due to space limitations, details on how POMCP works are provided in Appendix~C.

\subsection{Problem Statement}
We study the following policy synthesis problem:
\begin{problem}
\label{p1}
Given a POMDP $\M$ and an LTL specification $\varphi$, synthesise a deterministic belief-based policy $\pi$ that maximises the probability of satisfying $\varphi$, i.e. $\max_{\pi\in \Pi}\  \Pr_{\M}^\pi(\varphi).$
\end{problem}

Following the standard product construction for LTL and probabilistic systems~\cite{baier2008principles}, we form the product POMDP
$\Mt=(S^\times, S_0^\times, A^\times, O, T^\times, Z^\times, L^\times, \acc^\times)$
of the POMDP $\M$ with the LDBA $\A_\varphi$ derived from $\varphi$ (Definition~\ref{productautomaton2}). This synchronises the executions of $\M$ and $\A_\varphi$ and reduces LTL satisfaction in $\M$ to the B\"uchi objective $\ltlG\ltlF\,\acc^\times$ in $\Mt$:
\[
\max_{\pi\in \Pi}\ \mathrm{Pr}_{\M}^\pi(\varphi)
=
\max_{\pi^\times\in \Pi^\times}\ \mathrm{Pr}_{\Mt}^{\pi^\times}(\ltlG\ltlF\,\acc^\times).
\]
Accordingly, we further reformulate Problem~\ref{p1} as an equivalent reward-based policy synthesis problem over $\Mt$. The goal is to design a belief-based reward function that induces an optimal policy whose expected reward is equal to the maximum satisfaction probability of B\"uchi objective.

\begin{problem}[Reward-Based Policy Synthesis]
\label{p2}
Given the product POMDP $\Mt$ induced by a POMDP $\M$ and an LTL objective $\varphi$, design a belief-based reward function $R: B^\times \to [0, 1]$ and synthesis a policy $\pi^\times: B^\times \to A^\times$ such that
\[
\max_{\pi^\times\in \Pi^\times}\ \mathbb{E}_{\Mt}^{\pi^\times}\!\left[\sum_{t=0}^{\infty} R(b_t^\times)\right]
=
\max_{\pi^\times\in \Pi^\times}\ \mathrm{Pr}_{\Mt}^{\pi^\times}(\ltlG\ltlF\,\acc^\times).
\]
\end{problem}

\section{Solution Technique}
Solving Problem~\ref{p2} presents significant theoretical hurdles. Encoding LTL satisfaction via a fixed scalar reward is infeasible since assigning an exact reward is equivalent to computing exact satisfaction probabilities, which is known to be undecidable (Section~\ref{sec:intro}). Moreover, even if such a reward were available, maximising the expected reward within a POMDP remains undecidable~\cite{madani1999undecidability}. Practical methods must therefore rely on approximations in both reward shaping and policy synthesis.

We address this gap by proposing a sound approximation framework designed for practical POMDP-LTL synthesis. Exploiting the fact that almost-sure satisfaction is the only certifiable property, we first show that the exact reward signal in Problem~\ref{p2} can be soundly approximated by the belief mass over certified satisfaction regions derived from the belief-support structure (Section~\ref{sec:reward-ap}). We then develop a belief-dependent reward-shaping mechanism that uses a set of assignment rules to generate sound rewards dynamically, together with a pruning strategy that streamlines reward construction (Section~\ref{sec:reward}). Finally, we integrate this mechanism into an enhanced POMCP algorithm that steers the search toward maximising certified satisfaction and synthesising reliable belief-based policies (Section~\ref{sec:pomcp}).

\subsection{Sound Approximation of Rewards Signal}
\label{sec:reward-ap}
For POMDPs with B\"uchi objectives, while determining exact or zero satisfaction is undecidable, almost-sure satisfaction fortunately admits a decidable property. Specifically, it depends only on the belief support rather than the precise probability distribution~\cite{chatterjee2007algorithms}: if a belief $b$ is almost-sure winning, then any belief $b'$ with $\supp(b')=\supp(b)$ is also almost-sure winning. 
This allows the continuous belief space to be discretised into a finite belief-support MDP (BSMDP), enabling structural evaluation of certified satisfaction over belief support.
In this abstraction, each state corresponds to a belief support $\Theta \subseteq 2^{S^\times}$, and transitions mimic the belief executions in the product POMDP.

\begin{definition}[BSMDP~\cite{junges2021enforcing}]
\label{def-bsmdp}
Given a product POMDP $\Mt=(S^\times, S_0^\times, A^{\times}, O, T^\times, Z^\times, L^{\times}, \acc^\times)$, let $B^\times_{\supp}$ be belief supports induced by $\Mt$. The corresponding BSMDP is the tuple $\Mt_{B_{\supp}} = (B^\times_{\supp}, \Theta_0, T^\times_{\supp}, A^\times)$, where $\Theta_0$ is the initial belief support, and the transition function $T^\times_{\supp}$ is defined as $\Theta' \in T^\times_{\supp}(\Theta, a \in A^\times_{\mathrm{E}}(\Theta))$ if
\begin{equation}
    \Theta'\in \left\{\cup_{s\in \Theta} \post_{s}(a)\cap \{s \mid o \in \obs(s)\} \mid o \in O^\times\right\}.
    \label{eq:beleif-update}
\end{equation}
\end{definition}

The BSMDP serves as a formal bridge between qualitative almost-sure satisfaction and general quantitative analysis.
By reasoning over the BSMDP, we can identify specific winning supports: subsets of state from which a belief-support based policy exists to satisfy the B\"uchi objective with probability 1. 
These winning supports enable the construction of a sound reward signal that, for any belief, lower-bounds the true satisfaction probability by the portion of belief mass guaranteed to satisfy the objective almost surely.
We now formalise the notion of winning supports and classify beliefs based on their underlying support structure.

\begin{definition}[Winning Support and Winning Region]
A subset $\Theta \subseteq S^\times$ is a winning support if there exists a belief-support policy $\hat{\pi}^\times: B^\times_{\supp} \to A^\times$ beginning with $\Theta$ such that every state $s^\times \in \Theta$ satisfies the LTL objective with probability one under $\hat{\pi}^\times$. We denote the union of all winning supports as a winning region $\Wt := \bigcup \Theta$.
\end{definition}

\begin{definition}[Belief Categorisation]
A belief $b^\times \in B^\times$ can be categorised by the relationship between its support $\supp(b^\times)$ and $\Wt$ as follows:
\begin{enumerate}
    \item \textbf{Almost-sure Winning:} if $\supp(b^\times) \in \Wt$;
    \item \textbf{Partially Winning:} if $\exists \Theta \in \Wt$ s.t. $\Theta \subset \supp(b^\times)$;
    \item \textbf{Non-Winning:} if $\forall \Theta \in \Wt$ and $\Theta \not \subseteq \supp(b^\times)$.
\end{enumerate}
\end{definition}

Note that even if the initial belief is non-winning, the agent may reach a (partially) winning belief during belief evolution.

\begin{figure}
    \centering
    \includegraphics[width=\linewidth]{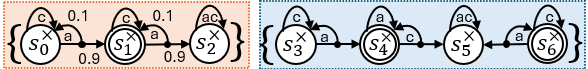}
    \caption{Partially winning belief (supports).}
    \label{fig:ex-max}
\end{figure}

\begin{example}
In the left of Figure~\ref{fig:ex-max}, beliefs supported on $\{s^\times_0, s^\times_1, s^\times_2\}$ are not almost-sure winning, but they contains a winning support $\{s^\times_1\}$ under action $c$. The satisfaction probability for any belief supported by $\{s^\times_0, s^\times_1, s^\times_2\}$ is lower-bounded by the belief mass assigned to $\{s^\times_1\}$ by repeating $c$.
\end{example}

Notably, winning supports are defined with respect to belief-support policies. Since a partially winning belief may contain multiple winning supports (each associated with a distinct policy), obtaining the tightest sound lower bound requires selecting the specific winning support that maximises the belief mass. Simply taking the union of all winning states would result in the \emph{common policy issue} (as discussed in Example \ref{ex-motivation}), where no single policy can satisfy the entire set. This principle is illustrated in the following example.

\begin{example}\label{exp3}
Consider beliefs supported by $\{s_3^\times, s_4^\times, s_5^\times, s_6^\times\}$ in the right of Figure~\ref{fig:ex-max}. The winning supports and their admissible actions are: $\{s^\times_3\}$ with $\{a\}$, $\{s^\times_4\}$ with $\{a\}$, $\{s^\times_3,s^\times_4\}$ with $\{a\}$, and $\{s^\times_6\}$ with $\{c\}$. These cannot be merged into $\{s^\times_3,s^\times_4,s^\times_6\}$ when computing a sound lower bound, as no single policy is winning for all three states. The tightest lower bound is therefore determined by whichever has the larger belief mass: $\{s^\times_3,s^\times_4\}$ or $\{s^\times_6\}$.
\end{example}

We now define $K_{b^\times}$ as the collection of all winning supports contained within the support of a belief state $b^\times$ in the product POMDP $\Mt$. For instance, in Example \ref{exp3}, a belief $b^\times$ with support $\{s_3^\times, s_4^\times, s_5^\times, s_6^\times\}$ would have the set of winning supports $K_{b^\times}=\{\{s^\times_3\}, \{s^\times_4\}, \{s^\times_3, s^\times_4\}, \{s^\times_6\}\}$. Then a sound belief-based reward signal can be defined as the maximum belief mass assigned to any winning support, reducing Problem~\ref{p2} to the following sound approximation:


\begin{theorem}\label{thm1}
    Given a product POMDP $\Mt$, the maximum expected reward in Problem~\ref{p2} is lower-bounded by a belief-based policy $\tilde{\pi}^\times: B^\times \rightarrow A^\times$ that maximises the probability of reaching a (partially) winning belief $b^\times$ with maximal belief mass over its winning supports $K_{b^\times}$:
    \[
    \max_{\pi^\times\in \Pi^\times}\ \mathbb{E}_{\Mt}^{\pi^\times}\!\left[\sum_{t=0}^{\infty} R(b_t^\times)\right]
    \geq 
    \max_{\tilde{\pi}^\times \in \Pi^\times} \mathbb{E}_{\Mt}^{\tilde{\pi}^\times} \left[ \max_{\Theta \in K_{b^\times}} b^\times(\Theta) \right],
    \]
    \label{thm:lb}
    where $b^\times(\Theta)$ is the belief mass assignment to the set of states contained in $\Theta$.
\end{theorem}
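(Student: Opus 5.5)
Since the left-hand side of Problem~\ref{p2} equals $\max_{\pi^\times}\Pr^{\pi^\times}_{\Mt}(\ltlG\ltlF\,\acc^\times)$, it suffices to show that for every belief-based policy $\tilde{\pi}^\times$ there is a belief-based policy $\pi^\times$ with
\[
\Pr^{\pi^\times}_{\Mt}(\ltlG\ltlF\,\acc^\times) \geq \mathbb{E}^{\tilde{\pi}^\times}_{\Mt}\Big[\max_{\Theta\in K_{b^\times}} b^\times(\Theta)\Big].
\]
I read the right-hand expectation as being taken at the (random) belief $b^\times$ reached at the stopping time at which $\tilde{\pi}^\times$ first enters a (partially) winning belief, with value $0$ on paths that never do, since $K_{b^\times}=\emptyset$ there. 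Taking the supremum over $\tilde{\pi}^\times$ then gives the claim.

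The construction is a \emph{switching policy}. Follow $\tilde{\pi}^\times$ until the stopping time $\tau$ at which the current belief $b^\times_\tau$ has $K_{b^\times_\tau}\neq\emptyset$. At that point fix a maximiser $\Theta^\star\in\arg\max_{\Theta\in K_{b^\times_\tau}} b^\times_\tau(\Theta)$ and from then on play the belief-support policy $\hat{\pi}^\times$ that witnesses $\Theta^\star$ as a winning support. Run it on the restricted support, i.e.\ track the successor supports as in Definition~\ref{def-bsmdp}, starting from $\Theta^\star$ rather than from $\supp(b^\times_\tau)$. Because $\tau$ and $\Theta^\star$ are functions of the history, and hence of the belief path, the combined policy is a legitimate observation-based policy. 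To make it a map $B^\times\to A^\times$ I will augment the belief with this finite bookkeeping, which is harmless because $\Pi^\times$ can be taken over the history-determined beliefs. I expect this point to need a remark.

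Next, condition on the history at $\tau$. By the strong Markov property the state at time $\tau$ is distributed as $b^\times_\tau$, so with probability $b^\times_\tau(\Theta^\star)$ it lies in $\Theta^\star$. Conditioned on that event, the true state is some $s^\times\in\Theta^\star$. Running $\hat{\pi}^\times$ along the support path started at $\Theta^\star$ is well defined: every reachable state from $s^\times$ is contained in the tracked support, because the tracked support over-approximates the set of states reachable from $\Theta^\star$ that are consistent with the observations. By the definition of winning support, every such $s^\times$ satisfies $\ltlG\ltlF\,\acc^\times$ almost surely. Hence the conditional success probability is at least $b^\times_\tau(\Theta^\star)=\max_{\Theta\in K_{b^\times_\tau}} b^\times_\tau(\Theta)$. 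Since the Büchi objective is prefix-independent, the finite prefix before $\tau$ does not affect acceptance. Taking expectations over the history up to $\tau$ and using the tower property yields the displayed inequality.

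The main obstacle is the consistency step. The observations generated by states outside $\Theta^\star$ might produce a support update that differs from the one $\hat{\pi}^\times$ expects, or might even lead to an observation for which the tracked support becomes empty. I will handle this by defining the tracked support as the intersection of the BSMDP successor of $\Theta^\star$ with the observation actually received. This sequence is always a valid BSMDP run whenever the true state descends from $\Theta^\star$, because such a descendant is consistent with the observation and so the intersection contains it. On histories where the intersection is empty, the true state did not descend from $\Theta^\star$, and the policy may act arbitrarily without lowering the bound. The assumption that enabled actions are constant within an observation class guarantees that the prescribed action is available.
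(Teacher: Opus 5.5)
The paper gives no proof of Theorem~\ref{thm1}; its only support is the informal discussion around it, together with the planner switching, at a partially winning belief, to the belief-support policy of a maximal-mass winning support. Your switching-policy argument is the natural formalisation of that idea, and its probabilistic core is sound:
\begin{itemize}
\item conditioned on the history at $\tau$, the event $s^\times_\tau\in\Theta^\star$ has probability $b^\times_\tau(\Theta^\star)$;
\item on every run descending from $\Theta^\star$, the support tracked from $\Theta^\star$ evolves exactly as in Definition~\ref{def-bsmdp}, so the winning-support property applies;
\item runs that start outside $\Theta^\star$ can only add probability;
\item prefix-independence of $\ltlG\ltlF\,\acc^\times$ finishes the argument.
\end{itemize}
You should also say explicitly that your policy and $\tilde{\pi}^\times$ induce the same law on histories up to $\tau$. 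That is what makes the final expectation the one under $\tilde{\pi}^\times$.

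\textbf{The gap: the switching policy is not in $\Pi^\times$.} The paper defines $\Pi^\times$ as maps $B^\times\to A^\times$. Your policy depends on the committed support $\Theta^\star$ and on the tracked support $\Theta_t$, and neither is a function of the current belief. A belief $b^\times$ with $K_{b^\times}\neq\emptyset$ can be reached in two ways:
\begin{itemize}
\item by a history entering it for the first time, which commits to $\arg\max_{\Theta\in K_{b^\times}}b^\times(\Theta)$;
\item by a history that committed earlier and whose tracked support is a different element of $K_{b^\times}$.
\end{itemize}
These two histories may prescribe different actions at the same belief. With first-entry $\tau$, the ``already switched'' flag is recoverable from $b^\times$, since $K_{b^\times_t}=\emptyset$ for $t<\tau$; the committed support is not.

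Augmenting the belief with this bookkeeping gives a policy with finite memory (the tracked support), which is not an element of $\Pi^\times$. Your remark that this is harmless runs the wrong way: beliefs are functions of histories, but a function of the history is not thereby a function of the belief. Nor can you assume that adding finite memory leaves the supremum unchanged. The belief MDP is infinite-state, and for B\"uchi-type objectives on infinite-state MDPs memoryless strategies are known not to suffice in general.

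Read literally, you have therefore proved the inequality only with the left-hand maximum taken over finite-memory policies, which is a weaker statement. The clean repair is to state that $\Pi^\times$ ranges over observation-history-dependent policies on both sides. That is also what the paper's own synthesised policies are, since they switch to a belief-support policy mid-execution. With that reading, your argument is complete.

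\textbf{Choice of stopping time.} Fixing $\tau$ as the first-entry time is narrower than the paper's intent. Example~\ref{ex-term} and the planner's terminal actions let $\tilde{\pi}^\times$ continue past a partially winning belief to reach a larger certified mass. Your argument works verbatim for any stopping time of the observation filtration, so state it in that generality.
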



\textbf{Tightness.} The lower bound in Theorem~\ref{thm:lb} attains the maximum certified LTL satisfaction probability. The remaining gap to the true value reflects satisfaction probability that cannot be certified by any belief-support policy, arising from the undecidable leakage illustrated in Example~\ref{ex-motivation}.

\subsection{Sound Reward Shaping}
Theorem~\ref{thm:lb} allows sound reward signals to be constructed by identifying winning supports in the BSMDP.
However, the state space of BSMDP grows exponentially with the state space of the product POMDP, rendering all winning supports and their associated winning policies computationally expensive. To maintain tractability, we extract a sufficient subset of winning supports, together with their almost-sure or partially winning beliefs, by pruning the search space according to two principles: (i) restricting the state space of the BSMDP to obtain a sub-BSMDP; and (ii) focusing on structurally relevant regions within this sub-BSMDP. A sound reward-shaping mechanism is then derived from this refined subset.

\textbf{Pruned Search Regions.}  Recall that in fully observable product MDPs with a B\"uchi objective, any satisfying run eventually enters and remains within a state-based AMEC. We hence restrict the BSMDP construction to belief supports that intersect the set of underlying state AMECs $\E^\times_S$ in $\Mt$:

\begin{equation}
    \hat{B}_\supp^{\times}:= \{\Theta \in B^{\times}_{\supp} \mid \Theta \cap \E^\times_S \not= \emptyset\}.
    \label{eq:sub}
\end{equation}

This restriction gives an over-approximation of the belief-support states relevant for satisfaction under belief-based policies, while substantially reducing the BSMDP state space. Moreover, since every winning support ensures almost-sure satisfaction, any belief-support path must almost surely enter a trapping region of the BSMDP in which each support state recurs infinitely often with satisfaction probability one. In the BSMDP, such closed recurrent accepting regions correspond to belief-support accepting MECs (BS-AMECs), defined formally below.

\begin{definition}[BS-(A)MECs]
A belief-support end component (BS-EC) of an BSMDP $\Mt_{B_\supp}$ is a sub-MDP $(E, A_E)$ such that the digraph induced by $(E, A_E)$ is strongly connected. A belief-support MEC (BS-MEC)  is an BS-EC $(E, A_E)$ that is not strictly contained in any other BS-EC $(E', A'_E)$.
We call a BS-MEC $(E, A_E)$ accepting (BS-AMEC) if all belief supports $\Theta \in E$ are winning. 
\end{definition}

\begin{figure}
    \centering
    \includegraphics[width=\linewidth]{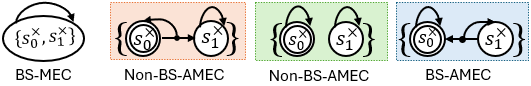}
    \caption{BS-MEC structure under differing acceptance conditions.}
    \label{fig:bsmecs}
\end{figure}
\label{sec:reward}

\textbf{Sufficient Winning Region.}
We therefore restrict the search for the set of winning-supports $\hat{\W}^\times$ ($\hat{\W}^\times\subseteq \W^\times$) to those contained in the BS-AMECs of the sub-BSMDP $\hat{\M}^\times_{B_{\supp}}$, constructed over belief supports that intersect the state-space AMECs defined in Eq.~(\ref{eq:sub}). Any other winning support will almost surely reach $\hat{\W}^\times$ (thus $\hat{\W}^\times$ is sufficient).

Determining, however, whether a BS-MEC is accepting is non-trivial, since the presence of $\acc^{\times}$ alone is insufficient. As illustrated in Figure~\ref{fig:bsmecs}, a BS-MEC may contain no winning support, may be only partially accepting, or is a BS-AMEC (from left to right in Figure~\ref{fig:bsmecs}).
Hence, belief-support information alone is inadequate: one must verify the existence of a belief-support-based policy under which all underlying states are winning. 
To this end, Algorithm~\ref{alg:amec} identifies BS-AMECs by constructing an auxiliary product MDP $\M^{\times\times}=(S^{\times\times}, A^{\times\times}, T^{\times\times}, \acc^{\times\times})$, where $S^{\times\times}= \{(s^\times,\Theta) \mid \ s^\times \in \Theta,\Theta \in E\}$, $A^{\times\times}(s^\times,\Theta)=A_{\E^\times}(\Theta)$, $T^{\times\times}((s^\times,\Theta),a)((s'^\times,\Theta'))=T^\times(s^\times,a)(s'^\times)$, and $\acc^{\times\times}= \{(s^\times,\Theta)\in S^{\times\times} \mid s^\times \in \acc^\times\}$. One can see that $\M^{\times\times}$ couples each BS-MEC with the underlying state-space dynamics, permitting joint reasoning over belief-support policies and state transitions. A BS-MEC is accepting iff all states in this product MDP are winning.

\begin{theorem}\label{thm2}
Given the sub-BSMDP $\hat{\M}^{\times}_{B_{\supp}}$ and its BS-MECs $(\E^{\times}_{B_{\supp}}, A_{\E^{\times}})$, Algorithm~\ref{alg:amec} returns a sound and complete union set $\hat{\W}^\times$ over all BS-AMECs.  
\end{theorem}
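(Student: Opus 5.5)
We do not see Algorithm~\ref{alg:amec}, so we assume it does the following. For each BS-MEC $(E, A_E)$ of $\hat{\M}^{\times}_{B_{\supp}}$ it builds the auxiliary product MDP $\M^{\times\times}$. It computes the almost-sure winning set for $\ltlG\ltlF\,\acc^{\times\times}$ by standard graph-based (AMEC/attractor) analysis. It then iteratively discards every belief support $\Theta$ for which some $(s^\times,\Theta)$ is not winning, together with the actions that may lead to discarded supports. It recomputes the BS-MECs of the remaining sub-BSMDP and repeats until a fixpoint. Finally it returns the union $\hat{\W}^\times$ of the surviving supports. The proof then splits into soundness (every returned support lies in a BS-AMEC and is winning) and completeness (every support of every BS-AMEC is returned). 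Termination is immediate, since each iteration strictly removes supports or actions from a finite structure.

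\textbf{Soundness.} Take a surviving sub-structure $(E', A'_{E'})$ at the fixpoint. By construction all states $(s^\times,\Theta)$ with $\Theta \in E'$ are almost-sure winning in $\M^{\times\times}$ restricted to $A'_{E'}$. The key observation is that the action set of $\M^{\times\times}$ depends only on the $\Theta$-component: $A^{\times\times}(s^\times,\Theta)=A_{\E^\times}(\Theta)$. Transitions from $(s^\times,\Theta)$ under $a$ go to $(s'^\times,\Theta')$ with $s'^\times \in \post_{s^\times}(a)$ and $\Theta'$ the BSMDP successor given by Eq.~(\ref{eq:beleif-update}), so $s'^\times \in \Theta'$. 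The first point is what makes an $\M^{\times\times}$ strategy implementable by the agent; the second is what makes the $\Theta$-component track the true belief support.

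Almost-sure B\"uchi winning in a finite MDP admits a memoryless strategy, but a memoryless strategy of $\M^{\times\times}$ may choose different actions at $(s_1^\times,\Theta)$ and $(s_2^\times,\Theta)$. This is exactly the common policy issue, and we expect it to be the main obstacle. We would resolve it with the standard characterisation of almost-sure B\"uchi winning for POMDPs via belief supports~\cite{chatterjee2007algorithms,junges2021enforcing}. Since every action allowed at $\Theta$ keeps all successors inside the winning set, a policy that picks uniformly at random among $A'_{E'}(\Theta)$ visits every action infinitely often. Such a policy is a belief-support policy. By strong connectivity and the fixpoint condition, it reaches $\acc^{\times\times}$ infinitely often with probability one from every $(s^\times,\Theta)$. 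If the paper insists on deterministic policies, a round-robin choice with a counter per support derandomises this while preserving almost-sure B\"uchi, using the usual finite-memory argument. Hence every $\Theta \in E'$ is a winning support, and $E'$ is a BS-EC all of whose supports are winning. It is therefore contained in a BS-AMEC.

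\textbf{Completeness.} Let $(E,A_E)$ be a BS-AMEC with witnessing belief-support policy $\hat\pi^\times$. We show by induction over the iterations that no $\Theta \in E$ and no action used by $\hat\pi^\times$ on $E$ is ever removed. Under $\hat\pi^\times$, every $(s^\times,\Theta)$ with $\Theta\in E$ is almost-sure winning in $\M^{\times\times}$, since $\hat\pi^\times$ induces a strategy there. So the winning-set computation keeps these states. Because the removal step only deletes supports that contain a losing state, together with actions leading to deleted supports, the sub-MDP induced by $\hat\pi^\times$ on $E$ survives. Being strongly connected, it stays inside some recomputed BS-MEC. Thus $E \subseteq \hat{\W}^\times$. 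Combined with soundness, this shows $\hat{\W}^\times$ equals the union over all BS-AMECs.
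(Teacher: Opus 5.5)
\textbf{Verdict: genuine gap.} Your proof establishes correctness of a procedure you reconstructed, not of Algorithm~\ref{alg:amec}.

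The actual algorithm makes a single pass over the BS-MECs of $\hat{\M}^{\times}_{B_{\supp}}$ and adds each $(E,A_E)$ either wholesale or not at all. It uses one of two tests: (a) some $\Theta\in E$ has $\Theta\subseteq\acc^\times$, in which case nothing else is checked; (b) otherwise, $\M^{\times\times}$ built over $E$ with actions $A_E$ has $S^{\times\times}_w=S^{\times\times}$. There is no pruning and no recomputation of BS-MECs, so your fixpoint induction has nothing to attach to, and branch (a) is never treated.

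What does carry over is essentially the paper's argument for (b) and for completeness. Uniform choice over $A_E(\Theta)$ is a belief-support policy because $A^{\times\times}(s^\times,\Theta)=A_{\E^\times}(\Theta)$, and a winning support policy induces a winning strategy on $\M^{\times\times}$. Even here, three details need fixing:
\begin{itemize}
\item The uniform policy does not win by strong connectivity; the joint chain on $\M^{\times\times}$ need not be strongly connected. It wins because every BSCC of the uniformly randomised joint chain is closed under all of $A_E$, so it can contain a winning state only if it meets $\acc^{\times\times}$.
\item Completeness needs the witnessing policy to stay in $E$ and use only $A_E$. The definition of a winning support does not provide this, and the paper makes the same tacit assumption.
\item Your derandomisation fails. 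A per-support round-robin counter can synchronise with the hidden state. Take $\Theta=\{s_1,s_2\}$ with $s_1\xrightarrow{a}s_2$ and $s_2\xrightarrow{b}s_1$ surely, while $s_1\xrightarrow{b}$ and $s_2\xrightarrow{a}$ lead with probability $\tfrac12$ each to an accepting $t$ (which returns to $\Theta$) or to the other state of $\Theta$. The uniform policy wins almost surely. But if the counter reads $a$ while the true state is $s_1$, the run cycles $s_1s_2s_1\cdots$ forever. A counter also takes you outside the policy class $B^\times_{\supp}\to A^\times$. The paper instead cites Lemma~1 of \cite{chatterjee2010randomness}.
\end{itemize}

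Nor can the proof be patched simply by adding branch (a). First, your step ``a BS-EC all of whose supports are winning is contained in a BS-AMEC'' is false. A BS-AMEC is a \emph{maximal} end component all of whose supports are winning, and your pruned $E'$ may sit inside a BS-MEC with losing supports. So your procedure can return strictly more than the union of BS-AMECs.

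Second, the paper argues for (a) by running round-robin on the strongly connected support graph and concluding that the accepting $\Theta$ is visited infinitely often. This treats the support process as a Markov chain, which it is not. The relevant chain is the joint one on $\M^{\times\times}$, whose BSCCs need not project onto all of $E$. Concretely:
\begin{itemize}
\item Let $x,y$ share one observation, $x',y'$ another, and let $z$ be accepting with its own.
\item Let $x\to x'$, $y\to y'$ and $y'\to y$ surely, and $x'\to x$ or $z$ with probability $\tfrac12$ each.
\item Let $z$ go uniformly to $x$ or $y$ under $a$ but to $x$ under $c$, with all other moves action-independent.
\end{itemize}
The only state AMEC is $\{x,x',z\}$. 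The supports $\{x,y\},\{x',y'\},\{x\},\{x'\},\{z\}$ form a single BS-MEC of the sub-BSMDP, which (a) accepts because $\{z\}\subseteq\acc^\times$. Yet from $y$ the run is $yy'yy'\cdots$, so $\{x,y\}$ is not winning. This example has no BS-AMEC at all, while your procedure correctly returns $\{x\},\{x'\},\{z\}$.

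Your instinct to route every case through $\M^{\times\times}$ is therefore the sound one. Once the derandomisation is repaired, however, it proves correctness of a different fixpoint procedure, whose output is the union of maximal self-winning sub-end-components. It does not prove the stated claim about Algorithm~\ref{alg:amec} and BS-AMECs.
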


\begin{proofscratch}
\emph{Soundness.}
If there exists $\Theta \in E$ with $\Theta \subseteq \acc^\times$, then $E$ is a winning support and $(E,A_E)$ is a BS-AMEC by the MEC property.
Otherwise, consider the randomised belief-support policy $\bar{\pi}^{\times}$ that selects uniformly from $A_E$.
If $S^{\times\times}_w = S^{\times\times}$, every BSCC induced by $\bar{\pi}^{\times}$ is accepting.
By Lemma~1 of~\cite{chatterjee2010randomness}, a corresponding deterministic policy $\tilde{\pi}^{\times}$ exists; hence $(E,A_E)$ is a BS-AMEC.
\emph{Completeness.}
Immediate from Definitions~5 and~7.
A full proof is provided in Appendix~D.
\end{proofscratch}





\begin{algorithm}
\caption{Compute BS-AMECs}
\label{alg:amec}
\begin{algorithmic}[1]
\algrenewcommand\algorithmicrequire{\textbf{ Input:}}
\algrenewcommand\algorithmicensure{\textbf{ Output:}}

\Require BS-MECs $(\E^{\times}_{B_{\supp}}, A_{\E^{\times}})$ of sub-BSMDP $\hat{\M}^{\times}_{B_{\supp}}$
\Ensure Sufficient winning region $\hat{\W}^\times$

\State $\hat{\W}^\times \gets \emptyset$

\ForAll{$(E, A_E) \in (\E^{\times}_{B_{\supp}}, A_{\E^{\times}})$}
    \If{$\exists\, \Theta \in E \text{ s.t. } \Theta \subseteq \acc^\times$}
        \State $\hat{\W}^\times \gets \hat{\W}^\times \cup \{\Theta \in E\}$
    \Else
        \State Construct the product MDP $\M^{\times\times}$
        \State Compute the winning state set $S^{\times\times}_w$ of $\M^{\times\times}$
        \If{$S^{\times\times}_w = S^{\times\times}$}
            \State $\hat{\W}^\times \gets \hat{\W}^\times \cup \{\Theta \in E\}$
        \EndIf
    \EndIf
\EndFor

\end{algorithmic}
\end{algorithm}

\textbf{Sound Reward Assignment Rules.} 
We then define the sound reward structure
based on the sufficient winning region $\hat{\W}^\times$ computed by Algorithm~\ref{alg:amec}. We call a belief support $\Theta\in \hat{\W}^\times$ a \emph{certified winning support}.
Furthermore, $\Theta$ is called a \emph{certified partially winning support} if there exists a certified winning support $\Theta' \in \hat{\W}^\times$ such that $\Theta' \subset \Theta$. The set of all certified partially winning support is denoted by $\hat{\W}^\times_{\text{PW}}$.
Since a belief $b^\times$ with certified partially winning support $\supp(b^\times)\in \hat{\W}^\times_{\text{PW}}$ may contain multiple certified winning supports (see Example \ref{exp3}), we let $\hat{K}_{b^\times}$ be the collection of all such supports contained within $\supp(b^\times)$. 


Following the lower bound established in Theorem \ref{thm1}, the sound reward function $\hat{R}(\cdot)$ is defined as:
\begin{equation}\label{reward}
    \hat{R}(b^\times) =
    \begin{cases}
        1, & \text{if } \supp(b^\times)\in \hat{\W}^\times, \\[2mm]
        \displaystyle \max_{\Theta \in \hat{K}_{b^\times}} b^\times(\Theta), 
        & \text{if } \supp(b^\times)\in \hat{\W}^\times_{\text{PW}}, \\[3mm]
        0, & \text{otherwise},
    \end{cases}
\end{equation}
 where $b^\times(\Theta)$ is the belief mass assignment to the set $\Theta$.

\subsection{Online POMDP Planning}
\label{sec:pomcp}
To synthesise reliable belief-based policies for POMDP-LTL using the proposed sound reward function $\hat{R}(\cdot)$ in Eq. (\ref{reward}), we develop an enhanced POMCP algorithm as an anytime approximation method. Standard POMCP, however, is not directly applicable, as it relies on Monte Carlo simulations that assign rewards to individual state particles, which is incompatible with our belief-based reward signal. The termination conditions for the Monte Carlo simulation also require rigorous treatment. While entering a belief with a certified winning support is terminal as it already yields the maximum reward, entering a belief with a certified partially winning support is not necessarily so. In the latter case, the agent may continue to explore beliefs that offer higher satisfaction probabilities, a distinction further illustrated in Example~\ref{ex-term}.


\begin{example}
\label{ex-term}
Consider a belief node $b^\times$ in the Monte Carlo search tree with ideal distribution $\{s^\times_0:0.8, s^\times_1:0.1, s^\times_2:0.1\}$ in the left of Figure~\ref{fig:ex-max}. It is clear that $b^\times$ is partially winning since its support contains a winning support $\{s_1^\times\}$. If simulations terminate whenever a particle reaches any (certified) winning support, the resulting satisfaction probability of $b^\times$ is $0.1$ by the belief mass on $\{s^\times_1\}$. 
However, if simulations do not terminate early and all particles continue exploring, taking action $a$ leads to the belief $\{s^\times_0:0.08, s^\times_1:0.73, s^\times_2:0.19\}$, which yields a higher lower bound as $0.73$.
\end{example}

We address reward assignment for particles by augmenting each POMCP tree node $\mathcal{T}(h_{t+k})$, which represents the belief $b_{t+k}^\times$ at time $t$ and simulation depth~$k$, with its ground-truth belief support $\Theta(h_{t+k}) = \supp(b_{t+k}^\times)$. This support is directly recoverable from the history.
Intuitively, the search tree simulates BSMDP: each particle path induces a belief-support trace determined only by the history, allowing belief supports to be recovered without explicitly tracking belief distributions.
Hence, for each particle $s^\times_{t+k}$, we assign reward 1 if the associated belief support $\Theta(h_{t+k})$ is in $\hat{\W}^\times$, or if it is in $\hat{\W}^\times_{\text{PW}}$ and the particle lies in a certified winning support $\Theta \in \hat{K}_{b^\times_{t+k}}$. All remaining particles receive reward 0.



To handle termination for a POMCP tree node with a certified partially winning support, the Upper Confidence Bound (UCB) rule evaluates both terminal and exploratory options. Consider, for instance, a belief $b^\times$ with support $\{s^\times_3, s^\times_4, s^\times_5, s^\times_6\}$, as shown on the right of Figure~\ref{fig:ex-max}. As discussed in Example \ref{exp3}, the tightest lower bound for satisfaction is determined by the maximum belief mass assigned to either $\{s^\times_3, s^\times_4\}$ or $\{s^\times_6\}$. Consequently, the UCB selection mechanism identifies the optimal path by switching between: 1) the terminal policy for $\{s^\times_3, s^\times_4\}$ if it carries the dominant mass; 2) the terminal policy for $\{s^\times_6\}$ if it is more probable; or 3) any available exploration actions if they promise higher satisfaction through further search.
Exploration proceeds until a certified winning support is reached or the maximum depth is exceeded. This mechanism allows the planner to terminate at any time during execution by selecting an optimal belief-support policy; the resulting satisfaction probability is then guaranteed by the lower bound $\max_{\Theta \in \hat{K}_{b^\times}} b^\times(\Theta)$. The full planning algorithm are given in Appendix~E.2.

\section{Experiments}
We implemented our framework in Python\footnote{The source code is provided in https://github.com/Safe-Autonomy-and-Intelligence-Lab/POMDP-LTL}. In the following, We first evaluate a motivating toy example demonstrating that existing methods fail even in simple cases, while our approach succeeds in synthesising reliable belief-based policies. We further assess the effectiveness of our method on three standard POMDP benchmarks under various LTL tasks. The implementation details, experimental setup and additional results are provided in Appendix~E and F.

\subsection{Motivating Toy Example}
\begin{figure}[t]
    \centering
    \includegraphics[width=0.48\linewidth]{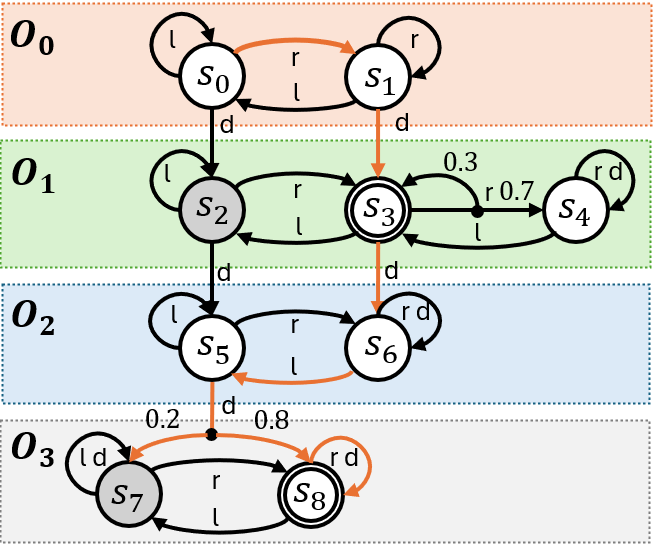}
    \caption{Motivating Toy Example.}
    \label{fig:toy}
\end{figure}

Consider the POMDP in Figure~\ref{fig:toy} with initial belief $b_0=\{s_0:1\}$ and actions left ($l$), right ($r$), and down ($d$). The LTL specification is
$
\varphi = \ltlG \neg(s_2 \lor s_7)\ \land\ \ltlG \ltlF (s_3 \lor s_8),
$
which requires visiting $s_3$ or $s_8$ infinitely often while always avoiding $s_2$ and $s_7$.
We compare our approach against optimistic reward shaping~\cite{bouton2020point} and a deterministic FSC variant~\cite{sharan2014finite}.

\textbf{Our method.}
Our analysis identifies a single winning support, $\{s_8\}$. The sound reward shaping assigns to each belief a value equal to its belief mass on $\{s_8\}$. Running our solver obtains the following belief-based policy and history: $b_0:\{s_0:1\}$ $\xrightarrow{r}$ $b_1:\{s_1:1\}$ $\xrightarrow{d}$ $b_2:\{s_3:1\}$ $\xrightarrow{d}$ $b_3:\{s_6:1\}$  $\xrightarrow{l}$ $b_4:\{s_5:1\}$  $\xrightarrow{d}$ $b_5:\{s_7:0.2, s_8:0.8\}$ $\xrightarrow{\{s_8\}}$. At $b_5$, the policy switches to the belief-support policy for $\{s_8\}$ (e.g. alternating $r$ and $d$), ensuring $\prob^{\pi}_{b_0}(\varphi)\ge 0.8$. Experimental details and POMCP values are provided in Appendix~F.2.

\textbf{Existing approaches.}
Using the optimistic reward shaping in~\cite{bouton2020point} with their PBVI solver reports a satisfaction probability of $1$ and forces early termination at the belief $b=\{s_3:1\}$, preventing sufficient exploration of the reachable belief space. Hence, the solver fails to construct $\alpha$-vectors valid for successor beliefs. Since the only exit actions encoded in these $\alpha$-vectors are $r$ and $d$, with $l$ absent for any possible satisfaction, execution ultimately results in zero LTL satisfaction. 
We next consider the deterministic one-memory FSC variant in~\cite{sharan2014finite}, which collapses to a memoryless observation-based policy. However, no FSC provides positive satisfaction for $b_0$. For observation $o_0$, action $r$ leads to $\{s_1:1\}$, which self-loops under $r$; action $d$ violates $\ltlG\neg(s_2\lor s_7)$; and action $l$ keeps the system at $\{s_0:1\}$. Larger memory may help, but selecting a suitable memory size is undecidable, making FSC synthesis brittle in practice. Conversely, our deterministic belief-based policy remains executable and offers a meaningful certified lower bound.

\subsection{Benchmark Domains and LTL Objectives}

\textbf{Hallway.}
We evaluate two Hallway models, $HW_1$ and $HW_2$, from~\cite{littman1995learning}. The agent moves in an office environment with uncertain location and four orientations, using stochastic forward and left turn actions. A short range sensor reports adjacent walls with some error. Three locations are labelled $A$, $B$ and $C$. We consider two LTL objectives: $\varphi_1 = \ltlF A \land \ltlG \neg B$, requiring the agent to eventually reach $A$ while avoiding $B$ at all times, and $\varphi_2 = \ltlF \ltlG C$, requiring the agent to reach $C$ and remain there indefinitely.

\noindent\textbf{Rock Sample.}
We follow the rock sample model from~\cite{junges2021enforcing}, where a robot moves on an $n \times n$ grid with two rocks, each either good $G$ or bad $B$ with initially unknown quality. The robot can move, sense adjacent cells, and sample rocks. We consider two LTL tasks: $\varphi_3 = \ltlF(G \land \ltlF E) \land \ltlG \neg B$, requiring it to first sample a good rock, then reach in an exit state $E$ while never reaching or sampling a bad rock, and $\varphi_4 = \ltlF \ltlG G$, requiring the robot to collect a good rock and remain there indefinitely.

\noindent\textbf{Grid World.}
The $Grid(n, n)$ model from~\cite{littman1995learning} is an $n \times n$ grid where each cell is labelled as a trap ($T$), goal 1 ($G_1$), or goal 2 ($G_2$). The agent moves left, right, or down, with a 0.2 chance of movement failure, and receives observations of neighbouring cells. We consider three LTL tasks. The first is $\varphi_5 = \ltlF G_1 \land \ltlG \neg T$, requiring the agent to eventually reach $G_1$ while always avoiding traps. The second is $\varphi_6 = \ltlG (G_2 \rightarrow \ltlF G_1) \land \ltlG \neg T$, requiring it to always avoid traps and, whenever it reaches $G_2$, eventually reach $G_1$.
The third is $\varphi_7 = \ltlG \ltlF G_1 \land \ltlG \neg T$, requiring it to visit $G_1$ infinitely often while avoiding traps.

\begin{table}[ht]
\centering
\resizebox{\linewidth}{!}{
\begin{tabular}{rcccc}
\toprule
 Domain & $|S^\times| / |O^\times| / |T^\times|$ & SR (s) & POMCP (s) & $\Pr_{\M}^\pi(\varphi)$  \\

\midrule
\multirow{5}{*}{}
\textbf{HW} \\
HW1 $\varphi_1$ & 180 / 24 / 624 & 0.015 & 74.967 & 0.254 \\
$\varphi_2$ & 180 / 24 / 684 & 0.020 & 52.759 & 0.707 \\
HW2 $\varphi_1$ & 276 / 24 / 975  & 0.547 & 64.356 & 0.247  \\
 $\varphi_2$ & 276 / 24 / 1067 & 0.743 & 44.643 & 0.464 \\

\midrule
\multirow{7}{*}{}
\textbf{RS} \\
4 $\varphi_3$ & 1024 / 112 / 7936 & 42.231 & 79.102 & 0.453  \\
$\varphi_4$ & 768 / 112 / 6208  & 0.078 & 91.788 & 0.469  \\
 
5 $\varphi_3$ & 1600 / 132 / 12480 & 58.246 & 121.315 & 0.326 \\
 $\varphi_4$ & 1200 / 132 / 9760  & 0.114 & 120.307 & 0.455 \\
 
7 $\varphi_3$ & 3136/ 244 / 24640 & 105.475 & 129.259 & 0.180  \\
$\varphi_4$ & 2352 / 244 / 19264 & 0.210 & 160.291 & 0.003  \\

\midrule
\multirow{10}{*}{}
\textbf{Grid} \\
50 $\varphi_5$ & 7500 / 1250 / 30990 & 10.987 & 82.198 & 0.334 \\
 $\varphi_6$ & 7500 / 1250 / 30990 & 10.597 & 0.246 & 1.000  \\
$\varphi_7$ & 5000 / 1250 / 20660 & 1.843 & 67.959  & 0.530 \\

100 $\varphi_5$ & 30000 / 5000 / 120840 & 81.361 & 99.407 &0.272  \\
$\varphi_6$ & 30000 / 5000 / 120840 & 82.334 & 0.248 & 1.000  \\
$\varphi_7$ & 20000 / 5000 / 80560 & 30.603 & 107.366 & 0.308  \\

120 $\varphi_5$ & 43200 / 7200 / 173580 & 191.431 & 182.667 & 0.000 \\
$\varphi_6$ & 43200 / 7200 / 173580 & 188.264 & 0.267 & 1.000  \\
$\varphi_7$ & 28800 / 7200 / 115720  & 66.314 & 183.521 & 0.058  \\

\bottomrule
\end{tabular}}
\caption{Numerical Results for POMDP-LTL Policy Synthesis.}
\label{res-table}
\end{table}

\subsection{Results and Discussion}
Table~\ref{res-table} summarises the performance of our approach across all tasks. For each case, we report the product model details ($|S^\times| / |O^\times| / |T^\times|$), the LTL objective ($\varphi_i$) with its satisfaction probability ($\Pr_{\M}^\pi(\varphi)$), the time to construct the sound reward function (SR (s)), and the policy synthesis time using the enhanced POMCP solver (POMCP (s)). Further details on the original models and additional experimental results are provided in Appendix~F.4.

The results show that our method synthesises POMDP-LTL policies effectively and efficiently, scaling to product models with over forty thousand states, with both sound reward construction and planning completing in about 190 seconds on a single CPU thread. Our approach is particularly effective for LTL tasks such as the reach-persistence objective $\varphi_4$ and the recurrence objective $\varphi_7$. Although the underlying state-space AMECs grow proportionally with the model size, they remain confined to a small region of the overall state space, which enables efficient construction of sound reward signals. For example, in the Rock Sample domain, as the model scales, the construction time for the task $\varphi_4$ increases only from $0.078$ to $0.210$ seconds.

For the POMCP results, although theory ensures $\prob_{\M}(\varphi_5) \ge \prob_{\M}(\varphi_7)$ as every trace satisfying $\varphi_5$ also satisfies $\varphi_7$, we observed $\prob_{\M}(\varphi_5) < \prob_{\M}(\varphi_7)$ in practice. This mismatch arises from the differing sizes of their LDBAs: $|\A_{\varphi_5}| = 3$ versus $|\A_{\varphi_7}| = 2$. The larger automaton for $\varphi_5$ results a larger product POMDP, requiring more simulations and deeper search for similar approximation quality.

\section{Conclusion}
This paper presented the first sound and practical anytime belief-based policy synthesis algorithm for full LTL in POMDPs. We approximate undecidable quantitative POMDP-LTL synthesis by seeking policies that maximise verifiable success. Central to this approach is a sound belief-dependent reward-shaping mechanism, integrated into an enhanced POMCP algorithm for reliable policy synthesis. 
For future work, we plan to extend this sound reward mechanism to a model-free setting, enabling safe POMDP reinforcement learning under LTL objectives.

\bibliographystyle{named}
\bibliography{main}

\input{appendix}

\end{document}

%% file: macros.tex
\usepackage{amsmath, amsthm, amssymb}
\usepackage{nccmath}
\usepackage{xspace}
\usepackage[dvipsnames]{xcolor}
\usepackage{todonotes}
\usepackage{tikz}
\usepackage{booktabs}
\usepackage{multirow}
\usetikzlibrary{automata, positioning, arrows.meta, angles, quotes, calc}
\usepackage{graphicx}
\usepackage{subcaption}
\usepackage{listings}

\tikzset{every picture/.style={line width=0.5pt}} 

\newtheorem{definition}{Definition}

\newtheorem{problem}{Problem}

\newtheorem{example}{Example}
\newtheorem{theorem}{Theorem}
\newenvironment{proofscratch}
  {\begin{proof}[Proof scratch]}
  {\end{proof}}

\usepackage{acro}

\newcommand{\bpath}{\text{BPaths}}
\newcommand{\spath}{\text{SPaths}}

\newcommand{\A}{\mathcal{A}}

\newcommand{\E}{\mathcal{E}}

\newcommand{\M}{\mathcal{M}} 
\newcommand{\N}{\mathcal{N}}
 
\renewcommand{\P}{\mathcal{P}}

\newcommand{\T}{\mathcal{T}}
 
\newcommand{\V}{\mathcal{V}}
\newcommand{\W}{\mathcal{W}}

\newcommand{\Mt}{\mathcal{M}^\times} 
\newcommand{\Wt}{\mathcal{W}^\times}

\newcommand{\trans}{\Delta}
\newcommand{\acc}{\mathtt{Acc}}

\newcommand{\ap}{\mathtt{AP}}
\newcommand{\supp}{\mathtt{Supp}}
\newcommand{\post}{\mathtt{Post}}

\newcommand{\dist}{\mathtt{Dist}}
\newcommand{\obs}{\mathtt{obs}}
\newcommand{\infset}{\mathtt{Inf}}

\newcommand{\ltltrue}{\mathsf{true}}

\newcommand{\ltlU}{\mathsf{U}}
\newcommand{\ltlF}{\lozenge}
\newcommand{\ltlG}{\square}
\newcommand{\ltlX}{\bigcirc}

\newcommand{\prob}{\rm{Pr}}

%% file: appendix.tex
\appendix

\section*{Appendix}
\section{Extended Related Work}
\label{ap-rw}
For fully observable MDPs, LTL satisfaction is identified structurally via Accepting Maximal End Components (AMECs) within the product system. Under B\"uchi acceptance conditions, a state-based policy (e.g., a Round-Robin policy) can ensure almost-sure satisfaction. Consequently, policy synthesis for MDPs under LTL is effectively reduced to a reachability problem toward AMECs, solvable via value iteration or linear programming~\cite{baier2008principles}.

In contrast, the general problem of finding a policy to satisfy an LTL formula in an infinite-horizon POMDP is undecidable. Prior work therefore has pursued various tractable approximations. A possible approach is to restrict the policy class to randomised finite-memory policies encoded as finite-State controllers (FSCs). While~\cite{chatterjee2015qualitative} provides heuristic procedures for qualitative (almost-sure) objectives, quantitative synthesis, seeking to maximise the satisfaction probability, typically reduces to a computationally intensive non-linear or relaxed optimisation problem~\cite{sharan2014finite,ahmadi2020stochastic}. Further model-free analyses of FSCs are given in~\cite{carr2020verifiable,carr2021task}.

However, restricting the policy class introduces fundamental limitations. Probability-leakage issue, such as those in Example~1, are inherently overlooked by finite memory, and optimisation over FSCs generally yields only locally optimal solutions for the chosen memory size~\cite{ahmadi2020stochastic}. Moreover, performance is highly sensitive to this size, but determining the minimal memory required for a given task is undecidable~\cite{madani1999undecidability}. Small FSCs often fail to retain essential history, resulting in zero satisfaction probability~\cite{sharan2014finite}. 
Additionally, existing approaches further rely on randomised policies to facilitate continuous optimisation, but in practice, particularly in safety-critical systems, such stochasticity undermines reproducibility. Therefore, there is a need for deterministic belief-based policies, mapping each belief state to an action, to ensure consistent and predictable execution.

A complementary line of work focuses on finite-trace LTL specifications (LTL$_f$) under belief-based policies, yielding a one-shot satisfaction formulation. Under LTL$_f$ semantics, planning reduces to reaching accepting states of a deterministic finite automaton in the product system, enabling the use of standard approximate POMDP solvers; e.g.,~\cite{kalagarla2022optimal,kalagarla2024optimal} employs PBVI and~\cite{liu2021leveraging} uses POMCP.  However, many planning tasks impose infinite-horizon constraints, such as safety, recurrence, and persistence, that require full LTL semantics. To address this,~\cite{bouton2020point} considers full LTL and applies PBVI to maximise satisfaction probability. A limitation is that, when state-based accepting MEC structure is used for reward shaping within an approximate solver, the induced value estimates can be optimistic under partial observability: the shaping signal can suggest a non-trivial satisfaction probability for a belief, without providing a corresponding certified belief-based policy that realises this probability by steering the system to and sustaining accepting behaviour (see Example 1). This motivates the development of sound reward signals that provide certified LTL satisfaction guarantees and guide approximate solvers towards synthesising sound deterministic belief-based policies for full LTL in POMDPs.

\section{Product POMDP}\label{product_pomdp}
The complete definition of the product POMDP is given below.

\begin{definition}[Product POMDP]
Given a labelled POMDP $\M= (S, S_0, A, O, T, Z, L)$ and an LDBA $\A_\varphi= (\Sigma, Q, q_{0}, \trans, \acc)$ from the given LTL formula $\varphi$,
the product POMDP is a tuple $\Mt=(S^\times, S_0^\times, A^{\times}, O, T^\times, Z^\times, L^{\times}, \acc^\times)$, where:
\begin{itemize}
    \item $S^\times=S \times Q$ is the State space with initial States $S_0^\times = \{(s, \Delta(q_0, L(s))) \in S^\times \mid s \in S_0 \}$, and $A^{\times}=A\cup A^\epsilon$ where $ A^\epsilon:=\{\epsilon_q\mid q\in Q\}$;
    \item $T^{\times}: S^\times \times A^{\times} \to \dist({S^\times})$ is the probabilistic transition function, where :
    \begin{itemize}
        \item for $a \in A$, $T^{\times}((s, q), a)((s', q')) = {T}(s, a)(s')$ if $a\in A_E(s)$ and $q'= \trans(q, {L(s')})$;
        \item for $ a = \epsilon_{q'}\in A^\epsilon$, $T^{\times}((s, q), a)((s, q')) = 1$ {if $q \xrightarrow{\epsilon} q'$};
        \item otherwise, $T^{\times}((s, q), a, (s', q')) = 0$;
    \end{itemize}
    \item $Z^\times: S^\times \times A^\times \to \dist(O)$ is the observation function, given by $Z^\times((s, q), a, o) = Z(s, a, o)$ if $a\in A$;
  \item $L^{\times}: S^\times \to 2^{\ap}$, where $L^{\times}((s, q))=L(s)$;
  \item $\acc^\times=\{(s, q)\in S^\times  \mid q\in \acc\}$.
\end{itemize}
\end{definition}

\section{Partially Observable Monte-Carlo Planning}\label{pomcp}
POMCP~\cite{silver2010monte} is a widely adopted online planning algorithm for POMDPs, as it effectively mitigates the \emph{curse of dimensionality} through State sampling and the \emph{curse of history} via history sampling with a black-box simulator. At each step $t$, it uses Monte Carlo Tree Search~\cite{coulom2006efficient} to build and explore a search tree rooted at $\T(h_t) = \langle \N(h_t), \V(h_t), \P(h_t) \rangle$, where $\N(h_t)$ tracks the number of visits to the history $h_t$, $\V(h_t)$ estimates the expected return, and $\P(h_t)$ holds particles that approximate the belief State $b_t$. 

The algorithm mainly consists of four stages:
(1) \textbf{Selection:} A State $s$ is selected uniformly from the particle set $\P(h_t)$. 
(2) \textbf{Simulation:} If the current node is internal, an action $a$ is selected to maximise $\V(ha) + c \sqrt{\frac{\log N(h)}{N(ha)}}$ following the upper confidence bound rule, balancing exploration and exploitation. 
(3) \textbf{Expansion:} When a leaf node is reached, child nodes $\T(ha) = \langle \N(ha), \V(ha), \emptyset \rangle$ are created for all possible actions. An action is then chosen according to a rollout policy (e.g., uniform random), and a new State $s'$ is generated using a simulator. The resulting State is added to $\P(hao)$. This process continues until a predefined depth limit. 
(4) \textbf{Backpropagation:} After simulation, the statistics of the traversed nodes are updated based on the results.
The agent then picks the action $a_t = \arg \max_a \V(h_t a)$, obtains a new observation $o_{t+1}$, and initiates the next search tree from the updated root $\T(h_t a_t o_{t+1})$.

\section{Proof of Theorem~2}
\label{App: proof}
\begin{theorem}
Given the sub-BSMDP $\hat{\M}^{\times}_{B_{\supp}}$ and its BS-MECs $(\E^{\times}_{B_{\supp}}, A_{\E^{\times}})$, Algorithm~1 returns a sound and complete union set $\hat{\W}^\times$ over all BS-AMECs.
\end{theorem}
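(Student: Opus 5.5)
We prove the two directions separately, working one BS-MEC $(E,A_E)$ of $\hat{\M}^{\times}_{B_{\supp}}$ at a time. Since $\hat{\W}^\times$ is built as a union over BS-MECs, it suffices to show that Algorithm~\ref{alg:amec} adds $E$ exactly when $(E,A_E)$ is a BS-AMEC, i.e.\ exactly when every $\Theta\in E$ is a winning support. Throughout we use the product MDP $\M^{\times\times}$. Its states $(s^\times,\Theta)$ carry the current belief support as a component, so any policy on $\M^{\times\times}$ that depends only on the $\Theta$-component is a belief-support policy of $\Mt$ restricted to $E$. Under any such policy, the distribution over paths of $\M^{\times\times}$ starting from $(s^\times,\Theta)$ coincides with the distribution over state paths of $\Mt$ starting from $s^\times$, paired with their belief-support traces. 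This correspondence holds because the transitions of $\M^{\times\times}$ copy $T^\times$, and the support update follows Eq.~(\ref{eq:beleif-update}). We establish this correspondence first, as a lemma.

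\emph{Soundness.} Suppose first that the first branch fires, i.e.\ some $\Theta\in E$ has $\Theta\subseteq\acc^\times$. Because $(E,A_E)$ is strongly connected, the policy that plays all actions of $A_E$ in round-robin (or uniformly at random) keeps the support trace inside $E$ and visits $\Theta$ infinitely often with probability one. Each visit to $\Theta$ puts the true state in $\acc^\times$, so the B\"uchi condition holds almost surely from every state of every support in $E$. Suppose instead that the second branch fires, i.e.\ $S^{\times\times}_w=S^{\times\times}$. Consider the randomised support-based policy $\bar\pi^\times$ that picks uniformly from $A_E(\Theta)$; it never leaves $E$ by definition of a BS-EC. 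We would like to argue that all BSCCs of the Markov chain it induces on $\M^{\times\times}$ contain accepting states. This is the main obstacle. Winning of all states in $\M^{\times\times}$ is in general witnessed by a state-based policy, which need not be $\Theta$-observable. The plan is to use that the almost-sure winning set is closed and that, inside an MEC of $\M^{\times\times}$ consisting entirely of winning states, the uniform policy over the actions that stay in the winning set reaches every MEC state infinitely often. Since these actions are exactly $A_E(\Theta)$, and they depend only on $\Theta$, every BSCC under $\bar\pi^\times$ hits $\acc^{\times\times}$. Lemma~1 of \cite{chatterjee2010randomness} then converts $\bar\pi^\times$ into a deterministic belief-support policy $\tilde\pi^\times$ with the same almost-sure guarantee. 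By the lemma above, $\tilde\pi^\times$ makes every $s^\times\in\Theta$ satisfy the objective almost surely, for each $\Theta\in E$. Hence every $\Theta\in E$ is a winning support.

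\emph{Completeness.} Conversely, let $(E,A_E)$ be a BS-AMEC, and suppose the first branch did not fire. For each $\Theta\in E$ there is a belief-support policy winning from all of $\Theta$. We would like these policies to combine into a single policy that stays in $E$. The maximality of the MEC, together with the sufficiency restriction (Definitions~5 and~7), lets us assume the witnessing policies stay in $E$: once a run leaves $E$ it cannot return, and $E$ is assumed to consist of winning supports. Transferring such a policy to $\M^{\times\times}$ through the lemma shows that every $(s^\times,\Theta)$ is almost-surely winning, so $S^{\times\times}_w=S^{\times\times}$ and $E$ is added. Taking the union over all BS-MECs gives $\hat{\W}^\times$ exactly.
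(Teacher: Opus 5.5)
Your outline follows the paper's proof almost step for step: round-robin or uniform play for the first branch, uniform randomisation on $\M^{\times\times}$ plus derandomisation for the second, and transfer of witnesses for completeness. It also inherits that proof's flaw in the first soundness case, which cannot be repaired because the first branch of Algorithm~1 is itself unsound.

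\textbf{First soundness case.} You claim that, because $(E,A_E)$ is strongly connected, round-robin or uniform play visits the support $\Theta\subseteq\acc^\times$ infinitely often with probability one. This is false. The support sequence is not a Markov chain on $E$: which successor support occurs depends on the hidden state. Strong connectivity of the support graph therefore says nothing about a run started from a particular $s^\times$. By your own lemma, the relevant chain lives on $\M^{\times\times}$, and its BSCCs need not project onto all of $E$.

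Here is a counterexample. Let $u,v$ share an observation, give $g$ and $z$ their own observations, set $\acc^\times=\{g\}$, and enable actions $a,b$ everywhere.
Under $a$: $u$ goes to $u$ or $g$, $v$ stays at $v$, and $g$ goes to $u$ or $v$, each with probability $1/2$.
Under $b$: $u$ and $v$ go to an absorbing sink $z$, and $g$ goes to $u$.

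The state-space AMEC is $\{u,g\}$. The set $E=\{\{u\},\{u,v\},\{g\}\}$ with $A_E(\{u\})=A_E(\{u,v\})=\{a\}$ and $A_E(\{g\})=\{a,b\}$ is a strongly connected BS-MEC of the sub-BSMDP containing $\{g\}\subseteq\acc^\times$, so the first branch adds $\{u,v\}$. Yet from $v$ every policy either loops at $v$ forever or falls into $z$. Hence $\{u,v\}$ is not winning, and $\hat{R}$ assigns reward $1$ to the belief $\{u:0.01, v:0.99\}$, whose optimal satisfaction probability is $0.01$.

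The fix is to remove the shortcut and run the $\M^{\times\times}$ test on every BS-MEC; that test correctly finds $(v,\{u,v\})$ losing.

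\textbf{Second soundness case.} This is fine in substance, though your phrasing about MECs of $\M^{\times\times}$ is loose. The clean argument is that every BSCC of the chain induced by uniform play over $A_E$ is closed under all actions of $A_E$. An almost-sure winning strategy started inside such a BSCC therefore never leaves it, which forces the BSCC to meet $\acc^{\times\times}$.

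\textbf{Completeness.} This also rests on an unjustified step. Maximality only rules out that $E$ together with some outside supports forms a larger end component. A run can still leave $E$ and return with positive probability, so ``once a run leaves $E$ it cannot return'' is false.

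More importantly, Definition~5 lets the witnessing policy use actions outside $A_E(\Theta)$ and exit $E$ into another BS-AMEC; nothing forces it to stay in $E$. For example, take a BS-MEC $\{\Theta\}$ whose only internal action is a non-accepting self-loop, while another action sends all of $\Theta$ into an absorbing accepting support. Add a third action placing one state of $\Theta$ in a state-space AMEC, so that $\Theta$ survives the pruning. Under the literal Definition~7 this is a BS-AMEC, but Algorithm~1 rejects it.

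Completeness holds only if ``winning'' in Definition~7 is read as ``winning using actions from $A_E$''. Under that reading the transfer to $\M^{\times\times}$ is immediate and maximality plays no role. The paper's completeness argument passes over the same point.
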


\begin{proof} 
Let $(E, A_E) \in (\mathcal{E}^{\times}_{B_{\text{supp}}}, A_{\mathcal{E}^{\times}})$ be a BS-MEC. We show that Algorithm~1 adds $(E,A_E)$ to $\hat{\W}^{\times}$ if and only if
$(E,A_E)$ is a BS-AMEC.

\textbf{Soundness:} 
Suppose $(E, A_E)$ is added to $\hat{\W}^\times$, then $(E, A_E)$ is a BS-AMEC.
\begin{itemize}
    \item \textbf{Case 1:} $\exists\, \Theta \in E$ such that $\Theta \subseteq \acc^\times$. 
    
    Consider a Round-Robin belief support policy $\tilde{\pi}^\times$ over $E$. Since the digraph induced by $(E, A_E)$ is strongly connected, the Markov chain induced by $\tilde{\pi}^\times$ is a single bottom strongly connected component (BSCC). By Theorem 10.27 in~\cite{baier2008principles},  every belief support in $E$ is visited infinitely often almost surely. Since $\Theta \subseteq \acc^\times$, all states $s \in \Theta$ are accepting states and are visited infinitely often, so $\Theta$ is a winning support. By strong connectivity, every other belief support in $E$ reaches $\Theta$ infinitely often almost surely. Thus all belief supports in $E$ are winning supports, and $(E,A_E)$ is a BS-AMEC.

    \item \textbf{Case 2:} $S^{\times\times}_w = S^{\times\times}$.

    In this case, we first show that there exits a randomised belief support policy $\bar{\pi}^\times : B^\times_{\supp} \to \dist(A^\times)$ that if $S^{\times\times}_w = S^{\times\times}$, then $\mathrm{Pr}_{(E, A_E)}^{\bar{\pi}^\times}(\ltlG\ltlF\,\acc^\times) = 1$. 

    We consider a randomised memoryless belief-support policy $\bar{\pi}^{\times}$ as follows: for each belief support $\Theta \in E$,
\[
\bar{\pi}^{\times}(\Theta)(a) :=
\begin{cases}
\dfrac{1}{|A_E(\Theta)|}, & \text{if } a \in A_E(\Theta),\\[4pt]
0, & \text{otherwise}.
\end{cases}
\]
That is, $\bar{\pi}^{\times}$ selects uniformly among the actions enabled at $\Theta$. Since the product MDP $\M^{\times\times}$ synchronises the transitions of the belief supports $\Theta \in E$ with the transitions of their underlying states $s^\times \in \Theta$ and $A_E((s^\times, \Theta)) = A_E(\Theta)$, for $\M^{\times\times}$, this policy induce the corresponding state-based policy $\bar{\pi}^{\times\times}((s^\times,\Theta))$ with $s^\times \in \Theta$ that uniformly selects the actions enabled at $(s^\times,\Theta)$  as
\[
\bar{\pi}^{\times\times}((s^\times,\Theta)) :=
\bar{\pi}^{\times}(\Theta)
\]

Because $S^{\times\times}_w=S^{\times\times}$, every state of
$\M^{\times\times}$ is winning under the action restriction $A_E$. Hence
every bottom MEC of $\M^{\times\times}$ is accepting, that is,
$\mathcal{E}^{\times\times}\cap\acc^{\times}\neq\emptyset$.
The Markov chain induced by $\bar{\pi}^{\times\times}$ has the same underlying graph as
$\M^{\times\times}$ restricted to $A_E$.
Therefore, the BSCCs $T$ in $\M^{\times\times}_{\bar{\pi}}$ has $T \cap \acc \not= \emptyset$, i.e., $\mathrm{Pr}_{(E, A_E)}^{\bar{\pi}^{\times\times}}(\ltlG\ltlF\,\acc^\times) = 1$. Consequently, $\mathrm{Pr}_{(E, A_E)}^{\bar{\pi}^\times}(\ltlG\ltlF\,\acc^\times) = 1$. 
By Lemma 1 in~\cite{chatterjee2010randomness}, for this randomised
belief-support policy there exists a deterministic belief-support policy
$\tilde{\pi}^{\times}:B^{\times}_{\supp}\to A^{\times}$ such that
$
\Pr_{(E,A_E)}^{\tilde{\pi}^{\times}}
(\ltlG\ltlF\,\acc^{\times})=1 .
$
Thus $(E,A_E)$ is a BS-AMEC.

\end{itemize}

\textbf{Completeness.}
Suppose $(E, A_E)$ is a BS-AMEC, then $(E, A_E)$ is added to $\hat{\W}^\times$.
By Definition~7, every belief support $\Theta\in E$ is winning under some belief-support policy $\tilde{\pi}^{\times}$.
If there exists $\Theta\in E$ such that $\Theta\subseteq\acc^{\times}$, then Case~1 holds. 
Otherwise, since every $\Theta\in E$ is winning, Definition~5 implies that, under the corresponding policy, every state $s^{\times}\in\Theta$ satisfies the LTL objective with probability one. Hence the winning-set computation on $\M^{\times\times}$ results $S^{\times\times}_w=S^{\times\times}$. Therefore Algorithm~1 adds $(E,A_E)$ by Case~2.
Hence, Algorithm~1 is sound and complete.
\end{proof}

\section{Implementation Details}
\subsection{Algorithm 1}
For the winning state set $S^{\times\times}_{w}$ of the product MDP $\M^{\times\times}$, we first identify all AMECs using the standard procedure, for example Algorithm~47 in~\cite{baier2008principles}. We then compute the backward reachability set of these AMECs following Algorithm~45 in~\cite{baier2008principles}. If every state can reach some AMEC, then all states are almost surely winning in $\M^{\times \times}$.

\subsection{POMCP for LTL}
\label{ap-pomcp}
For reward assignment, each simulation (or rollout) propagates a particle $s^\times_{t+k}$ along with its belief support $\Theta(h_{t+k})$. An action $a^\times_{t+k+1}$ is sampled from $A_E(\Theta(h_{t+k}))$, the state transitions to $s^\times_{t+k+1}$, and an observation $o^\times_{t+k+1}$ is generated. The belief support $\Theta(h_{t+k+1})$ is then updated according to Eq.~(1) of the BSMDP model, up to a fixed depth or belief-support-based AMECs are reached. A subtlety in the product POMDP arises from the use of LDBAs, which may include the $\epsilon$-action $\epsilon_{q_i}$. To handle $\epsilon_{q_i}$, the belief support is updated directly as $\Theta(h_{t+1}) = \left\{ (s_t, q_i) \mid (s_t, q_t) \in \Theta(h_t) \right\}$.

For termination, we include each winning support in the certified partially winning set as a terminal action. If the UCB rule selects such an action and the sampled state lies within the corresponding winning support, the simulation and the rollout both terminate immediately with a return of 1.

\begin{algorithm}

\caption{POMCP for LTL}
\begin{algorithmic}[1]
\State \textbf{procedure} \textsc{Search}($h_t$)
\Repeat
    \If{$t = 0$}
        \State $s \sim b_0$
    \Else
        \State $s \sim \P(h_t)$
    \EndIf
    \State \textsc{Simulate}($s, h_t, 0$)
\Until{\textsc{Timeout}()}
\Return $\arg\max_a V(h_ta)$
\State \textbf{end procedure}
\State
\State \textbf{procedure} \textsc{Simulate}($s, h_{t}, k, \Theta(h_{t})$)
\If{$\Theta(h_t) \in \hat{\W}^\times$}
    \Return 1
\EndIf
\If{$k \geq k_{\max}$}
    \Return 0
\EndIf
\If{$h_t \notin \T$}
    \ForAll{$a \in A_E(\Theta_t)$}
        \State $\T(h_ta) \leftarrow (0, 0, \emptyset, \emptyset)$ 
        \\  $\T_{\text{init}}(h_ta) = (\N_{\text{init}}(h_ta), \V_{\text{init}}(h_t a), \P_{\text{init}}(h_t a), \Theta_{\text{init}}(h_ta))$
    \EndFor
    \Return \textsc{Rollout}($s, h_t, k, \Theta(h_t)$)
\EndIf
\State $a \leftarrow \arg\max_a V(ha) + c \sqrt{\frac{\log N(h)}{N(ha)}}$
\If{ $a \in \hat{\W}^\times$ and $s \in a$}
    \Return 1
\EndIf
\If{ $a \in \hat{\W}^\times$ and $s \not \in a$}
    \Return 0
\EndIf
\State $(s', o) \sim G(s, a)$
\State $h_{t+1} \leftarrow h_tao$
\State $\Theta(h_{t+1}) \leftarrow  T_{B_{\supp}}(\Theta(h_t), a)( o)$
\State $R \leftarrow \textsc{Simulate}(s', h_{t+1}, k+1, \Theta(h_{t+1}))$
\State $\P(h) \leftarrow \P(h) \cup \{s\}$
\State $\N(h) \leftarrow \N(h) + 1$
\State $\N(ha) \leftarrow \N(ha) + 1$
\State $\V(ha) \leftarrow \V(ha) + \frac{R - \V(ha)}{\N(ha)}$ \\
\Return $R$
\State \textbf{end procedure}
\State
\State \textbf{procedure} \textsc{Rollout}($s, h_t, k, \Theta(h_t)$)
\If{$\Theta(h_t) \in \hat{\W}^\times$}
    \Return 1
\EndIf
\If{$k \geq k_{\max}$}
    \Return 0
\EndIf
\State $a \sim \pi_{\text{rollout}}(h_t, \cdot)$
\If{ $a \in \hat{\W}^\times$ and $s \in a$}
    \Return 1
\EndIf
\If{ $a \in \hat{\W}^\times$ and $s \not \in a$}
    \Return 0
\EndIf
\State $(s', o) \sim G(s, a)$
\State $h_{t+1} \leftarrow h_tao$
\State $\Theta(h_{t+1}) \leftarrow  T_{B_{\supp}}(\Theta(h_t), a)( o)$ \\
\Return \textsc{Rollout}($s', h_{t+1}, k+1, \Theta(h_{t+1})$)
\State \textbf{end procedure}
\end{algorithmic}
\label{alg:pomcp}
\end{algorithm}

The overall algorithm is given in Algorithm~\ref{alg:pomcp}.

\section{Experiment Details}
\label{ap-exp}
\subsection{Experimental Setup} 
We use the following hyperparameters for POMCP: 30,000 simulations per planning step, a simulation depth of 200 (increased to 250 for the Grid-120 setting), 10,000 particles sampled from the initial state distribution, and a UCB exploration constant of 1. All experiments were run on a workstation equipped with an Intel Xeon Gold 6230R CPU (2.10 GHz), restricted to a single thread with up to 50 GB RAM, running Ubuntu 20.04.6 LTS. All source code is provided in the supplementary material.

\subsection{Toy Example}

\label{ap:exp}
\begin{verbatim}
Winning supports: 
    {"['s8_0']"}
Winning supports sets: 
    {frozenset({'s8_0'})}
Partically Winning Supports: 
    {"['s7_0', 's8_0']": 
        [['s8_0']], 
    "['s7_sink', 's8_0']": 
        [['s8_0']], 
    "['s7_0', 's7_sink', 's8_0']": 
        [['s8_0']], 
    "['s8_0', 's8_sink']": 
        [['s8_0']], 
    "['s7_0', 's8_0', 's8_sink']": 
        [['s8_0']], 
    "['s7_sink', 's8_0', 's8_sink']": 
        [['s8_0']], 
    "['s7_0', 's7_sink', 's8_0', 
    's8_sink']": 
        [['s8_0']]}
AMECs and Actions: 
    ({frozenset({"['s8_0']"})}, 
    {"['s8_0']": {'r', 'd'}})
Time step 1 Selected action: r 
    Value: 0.7655767387368196 
    Current belief:  {'s0_0': 1000}
    Sampled State: s0_0
Time step 2 Selected action: d 
    Value: 0.7853010164190775 
    Current belief: {'s1_0': 1000} 
    Sampled State: s1_0
Time step 3 Selected action: d 
    Value: 0.7900195761926022 
    Current belief: {'s3_0': 1000} 
    Sampled State: s3_0
Time step 4 Selected action: l 
    Value: 0.7941048537727401 
    Current belief:  {'s6_0': 1000} 
    Sampled State: s6_0
Time step 5 Selected action: d 
    Value: 0.7945807309601273 
    Current belief: {'s5_0': 1000} 
    Sampled State: s5_0
Time step 6 Selected action: {'s8_0'} 
    Value: 0.7972816433701444 
    Current belief: {'s8_0': 792, 
                    's7_sink': 208} 
    Sampled State: s8_0
chose belief support policy, 
stop simulation, 
winning support: {'s8_0'}
\end{verbatim}

\begin{table*}
\centering
\begin{tabular}{cccccccccc}
\toprule
Env. & Set. & $|S| / |O| / |T|$& Prop. & $|\A_\varphi|$ & $|S^\times| / |O^\times| / |T^\times|$ & SR (s) & $\Pr_{\M}^\pi(\varphi)$ & POMCP (s) \\

\midrule
\multirow{4}{*}{Hallway} 
 & HW1 & 60 / 24 / 208 & $\varphi_1$ & 3 & 180 / 24 / 624 & 0.015 & 0.254 & 74.967\\
 &  && $\varphi_2$ & 3 & 180 / 24 / 684 & 0.020 & 0.707 & 52.759\\
 & HW2 &92 / 24 / 325& $\varphi_1$ & 3 & 276 / 24 / 975  & 0.547 & 0.247 & 64.356 \\
 &  && $\varphi_2$ & 3 & 276 / 24 / 1067 & 0.743 & 0.464 & 44.643\\

\midrule
 \multirow{6}{*}{Rock Sample} 
 & 4 &256 / 112 / 1984& $\varphi_3$ & 4 & 1024 / 112 / 7936 & 42.231 & 0.453 & 79.102 \\
 &  && $\varphi_4$ & 3 & 768 / 112 / 6208  & 0.078 & 0.469 & 91.788 \\
 
 & 5 &400 / 132 /3120& $\varphi_3$ & 4 & 12480 / 132 / 11200 & 58.246 & 0.326 & 121.315\\
 &  && $\varphi_4$ & 3 & 1200 / 132 / 9760  & 0.114  & 0.455 & 120.307\\
 
 & 7 &784 / 244 / 6160& $\varphi_3$ & 4 & 3136/ 244 / 24640 & 105.475 & 0.180 & 129.259 \\
 &  && $\varphi_4$ & 3 & 2352 / 244 / 19264 & 0.210 & 0.003 & 160.291 \\

\midrule
\multirow{6}{*}{Grid} 
 & 10 &100 / 50 / 570& $\varphi_5$ &3& 300 / 50 / 1710 & 0.892 & 0.689 & 29.396 \\
 &  && $\varphi_6$ &3& 300 / 50 / 1710 & 0.895 & 1.0 & 0.244 \\
 &  && $\varphi_7$ &2& 200 / 50 / 1140 & 0.024 & 0.726 & 24.464 \\

 & 50 &2500 / 1250 / 10330& $\varphi_5$ &3& 7500 / 1250 / 30990 & 10.987 & 0.334 & 82.198\\
 &  && $\varphi_6$ &3& 7500 / 1250 / 30990 & 10.597 & 1.0 & 0.246 \\
 &  && $\varphi_7$ &2& 5000 / 1250 / 20660 & 1.843 & 0.530 & 67.959 \\

 & 100 &10000 / 5000 / 40280& $\varphi_5$ &3& 30000 / 5000 / 120840 & 81.361 &0.272 & 99.407 \\
 &  && $\varphi_6$ &3& 30000 / 5000 / 120840 & 82.334 & 1.0 & 0.248 \\
 &  && $\varphi_7$ &2& 20000 / 5000 / 80560 & 30.603 & 0.308 & 107.366 \\

 & 120 &14400 / 7200 / 57860& $\varphi_5$ &3& 43200 / 7200 / 173580 & 191.431 & 0.0 & 182.667\\
 &  && $\varphi_6$ &3& 43200 / 7200 / 173580 & 188.264 & 1.0 & 0.267 \\
 &  && $\varphi_7$ &2& 28800 / 7200 / 115720  & 66.314 & 0.058  & 183.521\\

\bottomrule
\end{tabular}
\caption{Numerical Results for POMDP Policy Synthesis under LTL Specifications.}
\label{res-table2}
\end{table*}
\subsection{Benchmark Domain}

\paragraph{Hallway}
We consider two Hallway models of different sizes, $HW_1$ and $HW_2$, originally introduced in~\cite{littman1995learning} and also used in~\cite{chatterjee2015qualitative}. In this setting, an agent navigates an office building with uncertain location and orientation (North, East, South, or West). The agent can move forward or turn left, each succeeding with some probability. Its short-range sensor provides information about whether it is adjacent to a wall in its current orientation, subject to a certain probability of error. Specific locations in the office are labelled $A$, $B$ and $C$. We consider the following LTL objectives:
\begin{itemize}
    \item $\varphi_1 = \ltlF A \land \ltlG \neg B$: a reach-and-avoid task where the agent must eventually reach $A$ while always avoiding $B$.
    \item $\varphi_2 = \ltlF \ltlG C$: a persistence task where the agent must reach $A$ and remain there indefinitely.
\end{itemize}

\paragraph{Rock Sample}
We follow the rock sample model from~\cite{junges2021enforcing}, where a robot operates on an $n \times n$ grid containing two rocks, each either good $G$ or bad $B$, with initially unknown quality. The robot can move between cells, use a sensor on adjacent cells to observe rock quality, or sample a rock to collect it. We consider the following LTL tasks:
\begin{itemize}
    \item $\varphi_3 = \ltlF(G \land \ltlF E) \land \ltlG \neg B$: a sequential task requiring the robot to first sample a good rock, then eventually reach exit States ($E$), while never sampling a bad rock.
    \item $\varphi_4 = \ltlF \ltlG G$: a persistence task in which the robot finally collects the good rock and remain at that location indefinitely.
\end{itemize}

\paragraph{Grid World}
The model $Grid(n, n)$ is based on the problem introduced in~\cite{littman1995learning}. It comprises an $n \times n$ grid where each cell is labelled as a trap ($T$), goal 1 ($G_1$), or goal 2 ($G_2$). The agent moves left, right, or down, with a 0.2 probability of movement failure included in the transition dynamics. An observation is defined for each of the neighbouring States. We consider the following LTL tasks:
\begin{itemize}
    \item $\varphi_5 = \ltlF G_1 \land \ltlG \neg T$: a reach-and-avoid task where the agent must eventually reach $G_1$ while always avoiding traps.
    \item $\varphi_6 = \ltlG (G_2 \rightarrow \ltlF G_1) \land \ltlG \neg T$: a reactive task where the agent must always avoid traps and, whenever it reaches $G_1$, must eventually reach $G_2$.
    \item $\varphi_7 = \ltlG \ltlF G_1 \land \ltlG \neg T$: a recurrence task where the agent must visit $G_1$ infinitely often while avoiding traps.
    
\end{itemize}

\subsection{Full Results}
\label{ap-res}
Table~\ref{res-table2} summarises the performance of our approach across all tasks. For each case, we report the benchmark domain (Env.) and its setting (set.), the original model details ($|S| / |O| / |T|$), the LTL objective ($\varphi_i$), the product model details ($|S^\times| / |O^\times| / |T^\times|$), the time to construct the sound reward function (SR (s)), the satisfaction probability ($\Pr_{\M}^\pi(\varphi)$), and the policy synthesis time using the enhanced POMCP solver (POMCP (s)).

%% file: main.bib
@book{bongard2008probabilistic,
  title={{Probabilistic Robotics}},
  author={Sebastian, Thrun and  Wolfram, Burgard and Dieter, Fox},
  year={2005},
  publisher={The MIT Press, Cambridge}
}

@inproceedings{pnueli1977temporal,
  title={The temporal logic of programs},
  author={Pnueli, Amir},
  booktitle={18th annual symposium on foundations of computer science (sfcs 1977)},
  pages={46--57},
  year={1977},
  organization={ieee}
}

@inproceedings{bouton2020point,
  title={Point-based methods for model checking in partially observable Markov decision processes},
  author={Bouton, Maxime and Tumova, Jana and Kochenderfer, Mykel J},
  booktitle={Proceedings of the AAAI Conference on Artificial Intelligence},
  volume={34},
  pages={10061--10068},
  year={2020}
}

@inproceedings{sharan2014finite,
  title={Finite state control of POMDPs with LTL specifications},
  author={Sharan, Rangoli and Burdick, Joel},
  booktitle={2014 American Control Conference},
  pages={501--508},
  year={2014},
  organization={IEEE}
}

@article{kalagarla2024optimal,
  title={Optimal Control of Logically Constrained Partially Observable and Multiagent Markov Decision Processes},
  author={Kalagarla, Krishna C and Kartik, Dhruva and Shen, Dongming and Jain, Rahul and Nayyar, Ashutosh and Nuzzo, Pierluigi},
  journal={IEEE Transactions on Automatic Control},
  volume={70},
  number={1},
  pages={263--277},
  year={2024},
  publisher={IEEE}
}

@inproceedings{pineau2003point,
  title={Point-based value iteration: An anytime algorithm for POMDPs},
  author={Pineau, Joelle and Gordon, Geoff and Thrun, Sebastian and others},
  booktitle={Ijcai},
  volume={3},
  pages={1025--1032},
  year={2003}
}

@article{silver2010monte,
  title={Monte-Carlo planning in large POMDPs},
  author={Silver, David and Veness, Joel},
  journal={Advances in neural information processing systems},
  volume={23},
  year={2010}
}

@incollection{littman1995learning,
  title={Learning policies for partially observable environments: Scaling up},
  author={Littman, Michael L and Cassandra, Anthony R and Kaelbling, Leslie Pack},
  booktitle={Machine Learning Proceedings 1995},
  pages={362--370},
  year={1995},
  publisher={Elsevier}
}

@inproceedings{chatterjee2015qualitative,
  title={Qualitative analysis of POMDPs with temporal logic specifications for robotics applications},
  author={Chatterjee, Krishnendu and Chmelik, Martin and Gupta, Raghav and Kanodia, Ayush},
  booktitle={2015 IEEE International Conference on Robotics and Automation (ICRA)},
  pages={325--330},
  year={2015},
  organization={IEEE}
}

@inproceedings{sickert2016limit,
  title={{Limit-deterministic B{\"u}chi automata for linear temporal logic}},
  author={Sickert, Salomon and Esparza, Javier and Jaax, Stefan and K{\v{r}}et{\'\i}nsk{\`y}, Jan},
  booktitle={International Conference on Computer Aided Verification},
  pages={312--332},
  year={2016},
  organization={Springer}
}

@book{baier2008principles,
  title={{Principles of model checking}},
  author={Baier, Christel and Katoen, Joost-Pieter},
  year={2008},
  publisher={MIT press}
}

@article{vardi1994reasoning,
  title={Reasoning about infinite computations},
  author={Vardi, Moshe Y and Wolper, Pierre},
  journal={Information and computation},
  volume={115},
  number={1},
  pages={1--37},
  year={1994},
  publisher={Elsevier}
}

@inproceedings{kvretinsky2018rabinizer,
  title={Rabinizer 4: From LTL to your favourite deterministic automaton},
  author={K{\v{r}}et{\'\i}nsk{\`y}, Jan and Meggendorfer, Tobias and Sickert, Salomon and Ziegler, Christopher},
  booktitle={in Proceedings of International Conference on Computer Aided Verification},
  pages={567--577},
  year={2018},
  organization={Springer}
}

@inproceedings{kaelbling1995partially,
  title={Partially observable markov decision processes for artificial intelligence},
  author={Kaelbling, Leslie Pack and Littman, Michael L and Cassandra, Anthony R},
  booktitle={International Workshop on Reasoning with Uncertainty in Robotics},
  pages={146--163},
  year={1995},
  organization={Springer}
}

@inproceedings{coulom2006efficient,
  title={Efficient selectivity and backup operators in Monte-Carlo tree search},
  author={Coulom, R{\'e}mi},
  booktitle={International conference on computers and games},
  pages={72--83},
  year={2006},
  organization={Springer}
}

@inproceedings{junges2021enforcing,
  title={Enforcing almost-sure reachability in POMDPs},
  author={Junges, Sebastian and Jansen, Nils and Seshia, Sanjit A},
  booktitle={International Conference on Computer Aided Verification},
  pages={602--625},
  year={2021},
  organization={Springer}
}

@article{chatterjee2016decidable,
  title={What is decidable about partially observable Markov decision processes with $\omega$-regular objectives},
  author={Chatterjee, Krishnendu and Chmelik, Martin and Tracol, Mathieu},
  journal={Journal of Computer and System Sciences},
  volume={82},
  number={5},
  pages={878--911},
  year={2016},
  publisher={Elsevier}
}

@article{chatterjee2007algorithms,
  title={Algorithms for omega-regular games with imperfect information},
  author={Chatterjee, Krishnendu and Doyen, Laurent and Henzinger, Thomas A and Raskin, Jean-Fran{\c{c}}ois},
  journal={Logical Methods in Computer Science},
  volume={3},
  year={2007},
  publisher={Episciences. org}
}

@inproceedings{baier2008decision,
  title={On decision problems for probabilistic B{\"u}chi automata},
  author={Baier, Christel and Bertrand, Nathalie and Gr{\"o}{\ss}er, Marcus},
  booktitle={International Conference on Foundations of Software Science and Computational Structures},
  pages={287--301},
  year={2008},
  organization={Springer}
}

@article{ahmadi2020stochastic,
  title={Stochastic finite state control of POMDPs with LTL specifications},
  author={Ahmadi, Mohamadreza and Sharan, Rangoli and Burdick, Joel W},
  journal={arXiv preprint arXiv:2001.07679},
  year={2020}
}

@article{madani1999undecidability,
  title={On the undecidability of probabilistic planning and infinite-horizon partially observable Markov decision problems},
  author={Madani, Omid and Hanks, Steve and Condon, Anne},
  journal={Aaai/iaai},
  volume={10},
  number={315149.315395},
  year={1999}
}

@inproceedings{liu2021leveraging,
  title={Leveraging temporal structure in safety-critical task specifications for POMDP planning},
  author={Liu, Jason and Rosen, Eric and Zheng, Suchen and Tellex, Stefanie and Konidaris, George},
  booktitle={Proc. RSS Workshop Robot. People},
  year={2021}
}

@article{carr2021task,
  title={Task-aware verifiable RNN-based policies for partially observable Markov decision processes},
  author={Carr, Steven and Jansen, Nils and Topcu, Ufuk},
  journal={Journal of Artificial Intelligence Research},
  volume={72},
  pages={819--847},
  year={2021}
}

@article{carr2020verifiable,
  title={Verifiable RNN-based policies for POMDPs under temporal logic constraints},
  author={Carr, Steven and Jansen, Nils and Topcu, Ufuk},
  journal={arXiv preprint arXiv:2002.05615},
  year={2020}
}

@inproceedings{kalagarla2022optimal,
  title={Optimal control of partially observable markov decision processes with finite linear temporal logic constraints},
  author={Kalagarla, Krishna C and Dhruva, Kartik and Shen, Dongming and Jain, Rahul and Nayyar, Ashutosh and Nuzzo, Pierluigi},
  booktitle={Uncertainty in Artificial Intelligence},
  pages={949--958},
  year={2022},
  organization={PMLR}
}

@inproceedings{chatterjee2010randomness,
  title={Randomness for free},
  author={Chatterjee, Krishnendu and Doyen, Laurent and Gimbert, Hugo and Henzinger, Thomas A},
  booktitle={International Symposium on Mathematical Foundations of Computer Science},
  pages={246--257},
  year={2010},
  organization={Springer}
}
